\newtheorem{theorem}{Theorem}
\newtheorem*{theorem*}{Theorem}
\newtheorem*{assumption*}{Assumption}
\newcommand{\bc}{\begin{center}}
\newcommand{\ec}{\end{center}}
\def\ba#1{\begin{array}{#1}\displaystyle}
\newcommand{\ea}{\end{array}}
\newcommand{\beq}{\begin{equation}}
\newcommand{\eeq}{\end{equation}}
\newcommand{\beqa}{\begin{eqnarray}}
\newcommand{\eeqa}{\end{eqnarray}}
\newcommand{\no}{\nonumber}
\newcommand{\n}{\nonumber\\}
\newcommand{\bi}{\begin{itemize}}
\newcommand{\ei}{\end{itemize}}
\def\t#1{\tilde{#1}}
\def\b#1{\bar{#1}}
\def\frc#1#2{\frac{#1}{#2}}
\newcommand{\p}{\partial}
\newcommand{\Z}{{\mathbb{Z}}}
\newcommand{\N}{{\mathbb{N}}}
\newcommand{\R}{{\mathbb{R}}}
\newcommand{\C}{{\mathbb{C}}}
\newcommand{\ep}{\epsilon}
\newcommand{\1}{{\bf 1}}
\DeclareMathOperator{\dist}{{\rm dist}}
\DeclareMathOperator{\supp}{{\rm supp}}
\newcommand{\halmos}{\rule{1ex}{1.4ex}}
\newcommand{\eproof}{\hspace*{\fill}\mbox{$\halmos$}}
\begin{document}

\title[Rigorous bound on hydrodynamic diffusion for chaotic open spin chains]{Rigorous bound on hydrodynamic diffusion for chaotic open spin chains}


\author*[1]{\fnm{Dimitrios} \sur{Ampelogiannis}}\email{dimitrios.ampelogiannis@kcl.ac.uk}

\author[1]{\fnm{Benjamin} \sur{Doyon}}

\affil[1]{\orgdiv{Department of Mathematics}, \orgname{King's College London}, \orgaddress{\street{Strand}, \city{London}, \postcode{WC2R 2LS}, \country{UK}}}


\abstract{The emergence of diffusion is one of the deepest physical phenomena observed in many-body interacting, chaotic systems. But establishing rigorously that correlation functions, say of the spin, expand diffusively, remains one of the most important problems of mathematical physics. We establish for the first time, with Lindbladian evolution, a lower bound on spin diffusion in chaotic, translation-invariant, nearest-neighbor open quantum spin-1/2 chain satisfying a local detailed-balance condition and strong conservation of magnetization. The bound is strictly positive if and only if the local quantum jumps transport spin. Physically, the bound comes from the spreading effects of initial-state macroscopic fluctuations, a mechanism which occurs whenever spin is an interacting ballistic mode. Chaoticity means that the Hilbert space of extensive charges is spanned by magnetization; we expect this to be generic. Our main tool is the Green-Kubo formula, the mathematical technique of projection over quadratically extensive charges, and appropriate correlation decay bounds recently established. Because Lindbladian dynamics is not reversible, the Green-Kubo spin diffusion strength includes a contribution due to irreversibility, which we interpret as encoding the hydrodynamic entropy production that may occur in the forgotten environment. This, we show, vanishes for certain choices of interaction parameters, for which the Lindbladian dynamics becomes reversible. Our methods can be extended to finite or short ranges, higher spins, and other non-Hamiltonian systems such as quantum circuits. As we argue, according to the theory of nonlinear fluctuating hydrodynamics, we further expect these systems to display superdiffusion, and thus have {\em infinite} diffusivity; however this appears to still be beyond the reach of mathematical rigor.}

\keywords{diffusion, open quantum systems, spin chains, hydrodynamics}



\maketitle

\section{Introduction}\label{introduction}

Deriving the laws of hydrodynamics from the microscopic dynamics of many-body systems is a fundamental goal of statistical physics, and doing so rigorously forms an important part of Hilbert's sixth problem. Of great importance is establishing the transport properties, and in particular diffusive transport, in a universal manner, as emerging from the microscopic dynamics. In this endeavor, a crucial step is to determine rigorous lower bounds on the strength of the diffusive spread of response functions, via the celebrated Green-Kubo formula for the Onsager matrix and Einstein's relation connecting it to the diffusion parameters (see e.g.~\cite{spohn_large_1991,DeNardis_Doyon_2019_diffusion_GHD}). The Onsager matrix is manifestly non-negative in parity-time-symmetric dynamics \cite{DeNardis_Doyon_2019_diffusion_GHD}, leading to non-negative entropy production; however showing that it is strictly positive is a difficult problem. In quantum systems, this remains largely unsolved, with very few rigorous results. Focusing on spin chains, one such bound on spin diffusion at infinite temperature, derived by using special quadratically extensive charges, was obtained in \cite{Prosen_2014_diffusion_bound}, and a second bound by the curvature of the Drude weight in \cite{Medenjak_2017_diffusion_bound}. Both rely on the assumed absence of ballistic transport of spin and focus on certain integrable Hamiltonian models, where the charge is explicitly constructed. As far as we are aware, there exists no proof that the diffusion strength is strictly positive in chaotic quantum systems.

In general, there is a mechanism for diffusive effects, coming from large-scale fluctuations of the system's emergent hydrodynamic modes. This happens if hydrodynamic modes exhibit {\em interacting ballistic transport} -- more precisely, the eigenvalues of the flux Jacobian, which are the hydrodynamic velocities generalising the sound velocity in gases, are non-constant functions of the state. Indeed, in such cases, it is expected that fluctuations of the ballistic trajectories induced by initial-state macroscopic, thermodynamic fluctuations \cite{doyon_2023_hydro_longrange,doyon_2023_ballistic_fluctuations}, will lead to spreading that may contribute to diffusion \cite{Medenjak_Yoshimura_2020_Diffusion_bound,hubner_2024_diffusive} and even anomalous effects \cite{gopalakrishnan_2024_universality,krajnik_2022_anomalous_current_fl,yoshimura_2024_anomalous_current,gopalakrishnan_2024_nongaussian_dif}. A proof of this principle, which generalizes the ideas of \cite{Prosen_2014_diffusion_bound}, is found in \cite{doyon_2019_diffusion} where the Onsager matrix is bounded rigorously by a projection formula onto quadratically extensive charges constructed from ballistic modes and representing initial-state fluctuations. This works in one-dimensional systems under the assumption of strong-enough clustering properties of correlations. This principle is important: although chaotic Hamiltonian chains do not admit ballistic transport of spin or energy, as follows from a general result of Kobayashi and Watanabe \cite{kobayashi_2022_vanishing_currents}, as soon as the Hamiltonian structure is broken -- such as in open chains (i.e.~with incoherent processes) or unitary quantum circuits (i.e.~with discrete time) -- interacting ballistic transport may arise. However, the bound in \cite{doyon_2019_diffusion} is not explicitly non-zero, the presence of symmetries can in fact force it to be zero, and it relies on clustering assumptions that were until recently unproven.

In this paper, we construct a family of interacting, translation-invariant, nearest-neighbor spin-$1/2$ open (Lindbladian) infinite-length quantum chains, which conserve total magnetization, and for which we show an explicitly positive bound on the strength of linear-response spin diffusion. The models accounts for spin hopping, both coherently and incoherently, on nearest neighbors, along with phase decoherence effects. For the notion of diffusion to be meaningful in a Lindbladian system, it is natural to impose two properties which suggest that spin transport be of ``hydrodynamic type". First, we require magnetization $M = \sum_{x\in\Z} \sigma_x^3$ to be {\em strongly conserved} (generating a strong symmetry \cite{buca_2012_lindblad_transport}). This implies that we have a local spin conservation law. Second, we require the two-site local incoherent processes to satisfy a {\em local detailed-balance condition} (related to quantum detailed balance \cite{Alicki_1976_detailed_balance,fangola_2007_generator_quantum_markov,carlen_2017_gradient}). We show that this makes Gibbs states with density matrix $e^{\mu M}$ stationary. Thus one may expect emergent hydrodynamics for the local conserved spin thanks to local thermalization. Transport is generically not reversible in Lindbladian systems, but we show that local detailed balance also implies the existence of a (different) time-reverse dynamics that is well defined on local observables in the infinite chain. The family of models we consider is the most general family of translation-invariant, nearest-neighbor spin-1/2 open chains solving these two requirements.

For this family, we show that the spin-spin Onsager coefficient is bounded by projection onto quadratic charges, and that the bound is strictly positive if and only if {\em incoherent transport processes generate a non-zero current}. When this happens, spin is an interacting ballistic mode as, we show, there is a spin current in Gibbs states that depends quadratically on the average local magnetization $\mathsf s$, leading to a hydrodynamic velocity $v(\mathsf s)$ that depends on the $\mathsf s$; the bound is proportional to $(v'(\mathsf s))^2$ (Eq.~\eqref{boundjprime} below). Physically, then, the bound accounts for the spreading effects of initial-state macroscopic fluctuations, the mechanism described above. We establish the bound by applying the result of \cite{doyon_2019_diffusion}, which we make fully rigorous by adapting the proof in order to use recent results on clustering of $n$-th order correlations \cite{ampelogiannis_2024_clustering}. Finally, we show that for certain choices of parameters, the Lindbladian evolution is in fact {\em reversible}, and the irreversibility diffusion strength vanishes. Thus, for this sub-family of models, we have established a strictly positive normal spin diffusion strength. The techniques will apply to any Lindbladian quantum system with strong spin conservation and local detailed balance, and we believe the results will generically hold.

By the theory of nonlinear fluctuating hydrodynamics \cite{spohn_2014_nonlinear}, non-linear currents should give rise to the KPZ equation and superdiffusion. Rigorously establishing that the Onsager coefficient is infinite appears, however, to be yet inaccessible with current techniques.

\subsection{Overview of results: Onsager matrix and explicit bound}\label{section:results}

Because of the lack of time reversal (more precisely, parity-time-reversal) symmetry, constructing a manifestly non-negative quantity that measures diffusion requires special care. We construct a {\em normal diffusion strength} $\mathfrak L_{\rm norm}(t)$ measuring the linear-response diffusive spread of forward spin propagation at time $t$, and an {\em irreversibility diffusion strength} $\mathfrak L_{\rm irr}(t)$ measuring the diffusive spread of the mismatch, due to irreversibility, obtained by performing forward then backward evolution by time $t$:
\begin{align}
    \mathfrak L_{\rm norm}(t) &=
    \frc1{t} \sum_{x\in\Z} (x^2-(v(\mathsf s)t)^2) \langle
    \tau_t^*(s(x)),s(0)\rangle^c\\ 
    \mathfrak L_{\rm irr}(t)&=
    \frc1{2t} \sum_{x\in\Z} x^2\langle \tau_{t}\tau^*_t( s(x)),s(0)\rangle^c.
\end{align}
Here $\langle\cdot,\cdot\rangle^c$ is the connected correlation function in the Gibbs state $e^{\mu M}$ with average magnetization density $\mathsf s = \Tr e^{\mu M}\sigma^3_0/\Tr e^{\mu M}$; $A(x,t) = \tau_t^*(A(x))$ is the Heisenberg-picture forward Lindbladian time evolution of the local operator $A(x)$; $\tau_t$ is the backward time evolution; $v(\mathsf s)$ is the hydrodynamic spin velocity in the Gibbs state; and $s(x)=\sigma_x^3$ is the spin operator at $x$. Note that in the state with zero chemical potential ($\mathsf s=0$), the hydrodynamic velocity vanishes $v(0)=0$. Intuitively, the irreversibility diffusion strength encodes the entropy increase that may occur in the ``forgotten" environment.
Then we show, with full mathematical rigor and without extra assumption, in the general, explicit family of translation-invariant, nearest-neighbor spin-1/2 open chains with strong magnetization conservation and detailed balance, Eqs.~\eqref{time_evolution}, \eqref{hamiltonian_density}, \eqref{jump_operators}, that:
\begin{equation}\label{boundintro}
    \mathfrak L = \liminf_{t\to\infty} \Big(\mathfrak L_{\rm norm}(t) - \mathfrak L_{\rm irr}(t)\Big) \geq
    \frc{(\chi v'(\mathsf s))^2}{8v_{\rm LR}}
\end{equation}
where $\chi$ is the magnetic susceptibility and $v_{\rm LR}$ is the Lieb-Robinson velocity. In the family of models considered, expressions for $\chi$, $v'(\mathsf s)$ and $v_{\rm LR}$ are given in \eqref{chiexplicit}, \eqref{fluxcurvature}, and \eqref{eq:LR_velocity} with \eqref{Vtilde}, respectively. We expect the bound in the form \eqref{boundintro} to be universal for chaotic models with strong spin conservation and local detailed balance. Note that the limits $\lim_{t\to\infty} \mathfrak L_{\rm norm}(t)$ and $\lim_{t\to\infty}\mathfrak L_{\rm irr}(t)$ may in fact exist, but we have not proven this; in this case, the result is for the difference of these limits. The proof is obtained by naturally adapting the standard Green-Kubo formula, for spin diffusion, to Lindbladian evolution:
\begin{align}
   & \mathfrak L = \liminf_{T\to\infty} \frc1{T}\int_0^T\dd t \int_0^T\dd t'\,\sum_{x\in\Z}
    \langle j^-(x,t),j^-(0,t')\rangle^c, \label{Onsager} \\
    &\hspace{6cm}j^-(x,t) = j(x,t) - vs(x,t) .\label{eq:proejcted_spin_current}
\end{align}
We use quadratic charge projections in order to lower-bound this quantity, and we show that it indeed decomposes into the normal and irreversible contributions as above.

The usefulness of the above results relies on the idea that there are no other extensive conserved quantities coupling to the spin current -- as otherwise, additional ballistic spreading is expected to occur and the quantity $\mathfrak L$ is infinite. In its most general form, the Onsager matrix is
\begin{align}
   & \mathfrak L_{mn} = \liminf_{T\to\infty} \frc1{T}\int_0^T\dd t \int_0^T\dd t'\,\sum_{x\in\Z}
    \langle j_m^-(x,t),j_n^-(0,t')\rangle^c, \label{Onsagerintro} \\
    &\hspace{6cm}A^-(x,t) = (1 - \mathbb P) A(x,t)\label{eq:projected_operator}
\end{align}
where $j_m$ are the currents, and $\mathbb P$ is the projector onto the Hilbert space $\mathcal H_1$ of extensive conserved quantities \cite{doyon_hydrodynamic_2022} (we show that this Hilbert space exists for Lindbladian evolution with local detailed balance). As usual, this definition requires the knowledge of all extensive conserved quantities, which is difficult to establish in general, and \eqref{Onsager} with \eqref{eq:proejcted_spin_current} is obtained by assuming that the spin chain is chaotic, in the sense that $\mathcal H_1$ be one-dimensional, spanned by $M$ only (there is only the spin current $j=j_1$). More precisely, we assume that {\em spin transport be chaotic}: the spin current projects only onto $M$. Proving chaoticity is a non-trivial task that is currently out of reach, as it requires one to establish the absence of other extensive charges \cite{Prosen_1999_ergodic,prosen_1998_quantum_invariants,doyon_thermalization_2017,ampelogiannis_long-time_2023, doyon_hydrodynamic_2022}. In Hamiltonian systems there are exciting partial results, showing the absence of charges with local densities (a subset of extensive charges) in certain models \cite{shiraishi_2019_absence_local,chiba_2023_ising_nonintegrability,shiraishi2024absence,park_2024_nonintegrability}. In our understanding it is widely expected that generic interacting spin chains be chaotic.

\section{Set-up} \label{section:set-up}

\subsection{Time evolution}

We consider a quantum spin chain with local spaces $M_2(\mathbb{C})$, spanned by the Pauli matrices and the identity matrix, on each site $x\in\Z$. On each finite $Λ\subset \Z$ we associate the algebra $\mathfrak{U}_{Λ}= \otimes_{i\in Λ} M_2(\C)$. Observables form a quasi-local $C^*$ algebra $\mathfrak{U}$, the norm completion of the algebra $\mathfrak{U}_{\rm loc}= \cup_{Λ } \mathfrak{U}_Λ$ of local observables (supported on finite number of sites). We consider a homogeneous Lindbladian evolution where the generator of the dynamics is defined for each finite $Λ\subset \Z$ and contains both Hamiltonian nearest-neighbor interactions and dissipative two-site terms. The local Hamiltonian, for finite $Λ\subset \Z$, is
\begin{equation}
   H_Λ = \sum_{x\in Λ} h(x), \quad h(x) \in \mathfrak{U}_{\rm loc} .
\end{equation} 
The dissipative part is described by local quantum jump (or Lindblad) operators $L_i(x)\in \mathfrak{U}_{\rm loc}$, $x\in \Z$ for $i$ in some finite index set. On local observables $A\in \mathfrak{U}_{Λ}$ the Lindbladian acts as\footnote{This is the standard form of the Lindbladian in the Heisenberg representation, but written in terms of commutators -- this makes clear that its action is well defined on local operators.}
\begin{equation} \label{time_evolution}
    \mathcal L_{Λ}^*(A) = \partial_t A(t) =  i[H_Λ,A] 
     + \frc12\sum_{x\in Λ,\, i}\big(L_i(x)^\dag [A,L_i(x)] + [L_i(x)^\dag,A]L_i(x)\big). \no
\end{equation}
The thermodynamic limit on the generator $\lim_{Λ \to \Z}\mathcal L_\Lambda^* = \mathcal L^*$ exists on $\mathfrak U_{\rm loc}$; and that of the dynamics
\begin{equation}
    \tau^*_t = \lim_{Λ\to\Z} e^{t\mathcal L^* _\Lambda}
\end{equation}
exists on $\mathfrak U$ and is a strongly continuous completely positive dynamical semigroup  \cite{nachtergaele_2011_LR_irreversible}. 

We require strong spin conservation: the Hamiltonian density, and each quantum jump, are all required to preserve the total spin of the quantum chain,
\begin{equation}\label{strongconserved}
    [M,h(x)]= [M,L_{i}(x)] = 0,\quad M= \sum_{x\in\Z} \sigma^3_x.
\end{equation}
Here and below, for any $A\in \mathfrak{U}_{\rm loc}$ we write $[M,A(x)] = \lim_{\Lambda\to \Z} [M_\Lambda,A(x)]$ with $M_\Lambda= \sum_{x\in\Lambda} \sigma^3_x$. By  spin conservation, the Hamiltonian density and quantum jump operators take the general forms
\begin{equation} \label{hamiltonian_density}
    h(x) = \alpha\sigma^+_x\sigma^-_{x+1}
    + \bar\alpha \sigma^-_x\sigma^+_{x+1}
    + \beta\sigma^3_{x}\mathds{1}_{x+1}
    + \gamma\sigma^3_x\sigma^3_{x+1},
\end{equation}
and
\begin{equation} \label{jump_operators}
    L_i(x) = a_i\sigma^+_x\sigma^-_{x+1}
    + b_i \sigma^-_x\sigma^+_{x+1}
    + c_i\sigma^3_{x}\mathds{1}_{x+1}
    + d_i \mathds{1}_{x}\sigma^3_{x+1}
    + e_i\sigma^3_x\sigma^3_{x+1}
\end{equation}
where $α\in\C$, $\beta,γ\in \R$, $a_i,b_i,c_i,d_i,e_i \in \C$ are constants, which we will refer to as interaction parameters. Here, $σ_x^1$,$σ_x^2$,$σ_x^3 \in \mathfrak{U}_{ \{x \}}$ are the x, y and z Pauli matrices respectively acting on $\mathfrak{U} _{\{x \} }$ and $σ^{\pm}= \frac12(σ^1 \pm i σ^2)$.  Note that for the Hamiltonian density it is not necessary to have the term $\sigma^3_{x+1}$ as under $\sum_{x\in\Z}$ in $H$ this is redundant; while for the Lindblad operator $L_i$ there is no redundancy. Observe that for $|a_i|>|b_i|$ the $i$th process leads to incoherent spin transport towards the left, while for $|a_i|<|b_i|$ it is towards the right; and the sum of terms for $c_i,d_i,e_i$ represent various types of incoherent local dephasing.

Time evolution for Hamiltonian systems satisfies the Lieb-Robinson bound \cite{Lieb:1972wy}, which can be extended to Lindbladian dynamics \cite{poulin_2010_LR_markovian,nachtergaele_2011_LR_irreversible}, that yields a light-cone effect 
\begin{equation}
    \norm{[A(x,t),B] }\leq C_{A,B} e^{-λ(x-υ_{\rm LR}|t|)}.
\end{equation}
The Lieb-Robinson velocity $υ_{\rm LR}$ for \eqref{time_evolution} can be estimated by the results \cite{nachtergaele_2011_LR_irreversible,sims_2011_LR}:
\begin{equation} \label{eq:LR_velocity}
    υ_{\rm LR} = 4eζ(2) V , \quad V=2 \norm{h(x)} + 2 \sum_i \norm{L_i(x)}
\end{equation}
where $ζ$ is the Riemann function, and we can estimate for \eqref{hamiltonian_density}, \eqref{jump_operators}:
\begin{equation}\label{Vtilde}
    V \leq \t V = 2( 2|α|+|β|+|γ|) + 2 \sum_i( |a_i|+|b_i|+|c_i|+|d_i|+|e_i|).
\end{equation}
Lieb-Robinson bounds are generally loose, and improvements can be made in the velocity estimates \cite{Wang_2020_tightening_LR}.

\subsection{Gibbs states and inner products}

A state $ω$ is defined as a positive, normalised to 1, linear functional of $\mathfrak{U}$ and $ω(A)$ is interpreted as the ensemble average of the observable $A$. We consider the system to be in a Gibbs state $\omega_\mu$, for the total spin $M$, with chemical potential $\mu$, defined as 
\begin{equation} \label{gibbs_state}
    \omega_\mu(A) = \lim_{Λ \to \Z} \frc{\Tr_{\prod_{i\in\Lambda} \C^2}( e^{\mu M_Λ}A)}{\Tr_{\prod_{i\in\Lambda} \C^2}(e^{\mu M_Λ})}.
\end{equation}
Below, when there is no ambituity possible, we will write $\Tr(\rho A)/\Tr(\rho)$ to mean the limit $\Lambda\to\Z$ of $\Tr_{\prod_{i\in\Lambda} \C^2}(\rho_\Lambda A)/\Tr_{\prod_{i\in\Lambda} \C^2}(\rho_\Lambda)$. It is clear that the state $\omega_\mu$ satisfies a strong clustering condition: $\omega_\mu(AB) = \omega_\mu(A)\omega_\mu(B)$ for every $A\in\mathfrak U_\Lambda$, $B\in\mathfrak U_{\Lambda'}$ with $\Lambda\cap \Lambda'=\emptyset$.

Observables in $\mathfrak{U}$ can be organised in  equivalence classes to construct Hilbert spaces with inner products which encode the relevant contributions at different hydrodynamic scales, see \cite{doyon_2019_diffusion,doyon_hydrodynamic_2022, ampelogiannis_long-time_2023}. It is on these Hilbert spaces, and not on $\mathfrak U$, that certain important properties hold. From the state $ω$ we define the sesquilinear forms $\langle \cdot , \cdot \rangle_0$ and $\langle \cdot , \cdot \rangle_1$  as
\begin{align}\label{innerproduct}
    \langle A,B\rangle_0   &\coloneqq \langle A^\dag,B\rangle^c \coloneqq ω( A^{\dagger} B) - ω(A^{\dagger})ω(B)\\
    \langle A,B\rangle_1   &\coloneqq \sum_{x \in \Z} \langle A(x),B\rangle_0, \quad A,B\in \mathfrak{U}_{\rm loc}.
\end{align}
We define the equivalence relations $A \sim_k A^{\prime}$ iff $\langle A- A^{\prime}, A- A^{\prime} \rangle_k=0$, $k=0,1$, and form the quotient spaces $\mathcal{V}_k=\mathfrak{U}_{\rm loc}/ \sim_k$, 
which are pre-Hilbert spaces. Completing these spaces yields Hilbert spaces $\overline{\mathcal{V}_k} =\mathcal{H}_k$ where $\langle \cdot , \cdot \rangle_k$ is an inner product. $\mathcal H_0$ is simply related to the Gelfand-Naimark-Segal Hilbert space \cite{bratteli_operator_1987}. $\mathcal H_1$ is a Hilbert space whose elements we can think of as extensive quantities: note that any local observable is equivalent to its space translates. Hence we can think of the equivalence class of any local $A$ as the formal sum $\sum A(x)$ (the total $A$ of the spin chain) and the inner-product carries information about the convergent connected correlations of such observables. 

Time evolution generated by $\mathcal L^*$ gives rise to a time evolution operator on (the dense subspace of local elements of) $\mathcal H_k$. For $\mathcal H_1$, this requires the Lieb-Robinson bound, adapting the proof of \cite[Thm 5.11]{doyon_hydrodynamic_2022} and using the fact that $τ_t^*$ is a contraction of $\mathfrak{U}$: $\norm{τ_t^*A} \leq \norm{A}$, \cite{nachtergaele_2011_LR_irreversible}. This is done by the local approximation $σ_n(τ^*_tA)$ of time evolved observables $τ^*_tA$, $A\in \mathfrak{U}_{\rm loc}$\cite{nachtergaele_quasi-locality_2019}: $\norm{σ_n(τ^*_tA)-τ^*_tA} \leq C_A e^{-λ(n- υ_{LR}|t|)}$, which can be used to show that  $ \lim_n \langle σ_n(τ^*_tA), B \rangle_1 = \langle τ^*_tA,B \rangle_1$ exists for all $A,B \in \mathfrak{U}_{\rm loc}$ and conclude that $σ_n(τ^*_tA)$ converges weakly in $\mathcal{H}_1$. This defines $τ^{\mathcal{H}_1}_t: \mathcal{V}_1 \to \mathcal{H}_1$ by $τ^{\mathcal{H}_1}_t(A) = \rm w-\lim σ_n(τ^*_tA)$.
Below, by abuse of notation we will use the same symbols, $\tau_t$ and $\tau_t^*$, for time evolution on $\mathfrak U$, and on the Hilbert spaces $\mathcal H_k$, when there is no confusion possible.

\subsection{Local detailed balance}

The set of quantum jump operators is required to satisfy the {\em local detailed-balance condition}
\begin{equation} \label{detailed_balance}
    \sum_i [L_{i}(x),L_{i}(x)^\dagger] = o(x+1)-o(x)
\end{equation}
for some ``detailed-balance current" operator $o(x)\in \mathfrak U_{\rm loc}$. Note that by strong spin conservation and locality of $o(x)$, we have $[M,o(x)]=0$ for all $x$. The local detailed-balance condition, along with strong spin conservation, guarantees that the Gibbs state $\omega_\mu$ is stationary. Indeed, for any $A\in\mathfrak U_{\rm loc}$,
\begin{align}
    \omega_\mu(\mathcal L^*(A))
    &= \sum_{x\in\Z}\sum_{i}\omega_\mu(A[L_i(x),L_i(x)^\dagger]) \\
    &= \sum_{x\in\Z} \big(\omega_\mu (A o(x+1))-\omega_\mu(Ao(x))\big) = 0 \no
\end{align}
where in the first line we used the definition of $\mathcal L^*$ and strong spin conservation, and in the last we used telescopic summation, and clustering and homogeneity of the state $\omega_\mu$. By a straightforward calculation we find that the local detailed-balance condition is equivalent to
\begin{align}\label{condition}
    \sum_i a_i(\b d_i - \b c_i) = \sum_i \b b_i(d_i-c_i),
\end{align}
with $o(x) = \sum_i\frac{|b_i|^2-|a_i|^2}2 \sigma^3_x$.

Local detailed balance has two important consequences:
\begin{theorem}
The local Lindbladian defined by exchanging $L_i(x)\leftrightarrow L_i(x)^\dagger$ and changing the sign of the Hamiltonian,
\begin{equation}
    \label{time_evolution_conjugate}
    \mathcal L(A) = -i[H,A] 
     + \frc12\sum_{x\in \Z,\, i}\big(L_i(x) [A,L_i^\dag(x)] + [L_i(x),A]L_i^\dag(x)\big)
\end{equation}
is conjugate to $\mathcal L^*$ with respect to both inner products above,
\begin{equation}\label{adjointL}
    \langle \mathcal L^*(A),B\rangle_k =
    \langle A,\mathcal L(B)\rangle_k,\quad
    \langle \mathcal L(A),B\rangle_k =
    \langle A,\mathcal L^*(B)\rangle_k,
    \quad
    A,B\in\mathfrak U_{\rm loc}
\end{equation}
and gives rise to a strongly continuous completely positive dynamical semigroup $\tau_t=\lim_{\Lambda\to\Z} e^{t\mathcal L_\Lambda}$ on $\mathfrak U$ with a Lieb-Robinson bound with \eqref{eq:LR_velocity}. (2) Both $\tau_t^*$ and $\tau_t$ are contracting with respect to both inner products,
\begin{equation}\label{contracting}
    \langle \tau_t^*(A),\tau_t^*(A)\rangle_k \leq 
    \langle A,A\rangle_k,\quad
    \langle \tau_t(A),\tau_t(A)\rangle_k \leq 
    \langle A,A\rangle_k, \quad A\in\mathfrak U_{\rm loc},
\end{equation}
and thus can be extended to bounded operators forming semigroups on $\mathcal H_k$.
\end{theorem}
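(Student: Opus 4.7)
The proof splits into three parts: (i) the adjoint identities \eqref{adjointL}, (ii) existence of $\tau_t$ as a semigroup with Lieb-Robinson bound, and (iii) the contractivity \eqref{contracting}. For (i) on $\langle\cdot,\cdot\rangle_0$, my plan is to work first in finite volume $\Lambda$ with Gibbs density matrix $\rho_\Lambda=e^{\mu M_\Lambda}/\Tr e^{\mu M_\Lambda}$. Strong spin conservation \eqref{strongconserved} gives $[\rho_\Lambda,H_\Lambda]=0=[\rho_\Lambda,L_i(x)]$ whenever $x,x+1\in\Lambda$, so $\rho_\Lambda$ commutes with every building block of the generator. Using the duality $\Tr(\rho_\Lambda A^\dagger\mathcal L_\Lambda^*(B))=\Tr(\mathcal L_\Lambda^{\mathrm S}(\rho_\Lambda A^\dagger)B)$ with the dual-picture generator $\mathcal L_\Lambda^{\mathrm S}$ and rearranging the anticommutators $\{L_i^\dagger L_i,\cdot\}\leftrightarrow\{L_iL_i^\dagger,\cdot\}$ after commuting $\rho_\Lambda$ through, I factor $\mathcal L_\Lambda^{\mathrm S}(\rho_\Lambda A^\dagger)=\rho_\Lambda\bigl[(\mathcal L(A))^\dagger+\frac{1}{2}\sum_{x\in\Lambda,\,i}\{[L_i(x),L_i(x)^\dagger],A^\dagger\}\bigr]$. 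By \eqref{detailed_balance}, $\sum_{x\in\Lambda,\,i}[L_i(x),L_i(x)^\dagger]=o(\max\Lambda+1)-o(\min\Lambda)$ telescopes, so the extra piece collapses to $\frac{1}{2}\omega_\mu(\{o(\max\Lambda+1)-o(\min\Lambda),A^\dagger\}B)$; once $\Lambda$ contains the supports of $A,B$ with a buffer, both boundary charges commute with $A^\dagger$ and $B$, and strong clustering of $\omega_\mu$ combined with translation invariance of $\omega_\mu$ make this vanish. Combined with $\omega_\mu\circ\mathcal L^*=0=\omega_\mu\circ\mathcal L$ (the latter obtained by running the stationarity proof with the detailed-balance identity applied to $\mathcal L$), this yields $\langle\mathcal L^*(A),B\rangle_0=\langle A,\mathcal L(B)\rangle_0$ in the thermodynamic limit; the second identity in \eqref{adjointL} is the complex conjugate of the first after relabelling. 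Extension to $\langle\cdot,\cdot\rangle_1$ uses translation invariance of the generators: $\langle\mathcal L^*(A),B\rangle_1=\sum_x\langle\mathcal L^*(A(x)),B\rangle_0=\sum_x\langle A(x),\mathcal L(B)\rangle_0=\langle A,\mathcal L(B)\rangle_1$, with only finitely many nonzero summands by strong clustering.

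For (ii), $\mathcal L$ is itself a translation-invariant Lindbladian of exactly the form \eqref{time_evolution} with Hamiltonian $-H$ and jump operators $\{L_i^\dagger\}$, so existence of $\tau_t=\lim_{\Lambda\to\Z}e^{t\mathcal L_\Lambda}$ as a strongly continuous completely positive dynamical semigroup on $\mathfrak U$ follows by a verbatim application of the theorem of Nachtergaele et al.~\cite{nachtergaele_2011_LR_irreversible} already used for $\tau_t^*$. Since $\norm{L_i^\dagger(x)}=\norm{L_i(x)}$ and $\norm{-h(x)}=\norm{h(x)}$, the Lieb-Robinson constant $V$ in \eqref{eq:LR_velocity} is unchanged.

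For (iii) on $\mathcal H_0$, the Kadison-Schwarz inequality for the unital completely positive map $\tau_t^*$ gives $\tau_t^*(A)^\dagger\tau_t^*(A)\leq\tau_t^*(A^\dagger A)$; evaluating in the $\tau_t^*$-invariant state $\omega_\mu$ yields $\omega_\mu(\tau_t^*(A)^\dagger\tau_t^*(A))\leq\omega_\mu(A^\dagger A)$, and together with $\omega_\mu(\tau_t^*(A))=\omega_\mu(A)$ this gives \eqref{contracting} with $k=0$ for $\tau_t^*$; the same argument applies to $\tau_t$ using $\omega_\mu$-invariance under $\tau_t$. For $\mathcal H_1$, my plan is a Ces\`aro-averaging argument: setting $A_n=\sum_{|x|\leq n}A(x)$, Kadison-Schwarz gives $\omega_\mu(\tau_t^*(A_n)^\dagger\tau_t^*(A_n))\leq\omega_\mu(A_n^\dagger A_n)$; expanding both sides as double sums over translates and changing variables to $z=y-x$ produces weights $(2n+1-|z|)$. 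After subtracting the common divergent piece $(2n+1)^2\abs{\omega_\mu(A)}^2$ and dividing by $2n+1$, the right-hand side is a finite sum by strong clustering of $\omega_\mu$ and converges to $\sum_z\omega_\mu(A^\dagger A(z))^c=\langle A,A\rangle_1$, while the left-hand side converges by dominated convergence to $\langle\tau_t^*(A),\tau_t^*(A)\rangle_1$ provided $\omega_\mu(\tau_t^*(A)^\dagger\tau_t^*(A)(z))^c$ is summable in $|z|$. Securing this summable decay is the main obstacle; the plan is to approximate $\tau_t^*(A)$ by its quasi-local truncation $\sigma_n(\tau_t^*(A))$ \cite{nachtergaele_quasi-locality_2019}, use the strict strong clustering of $\omega_\mu$ on the truncation, and control the tail by the exponential Lieb-Robinson bound.
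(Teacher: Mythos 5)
Your parts (i) and (ii) follow essentially the paper's own route: the adjoint relations \eqref{adjointL} come from commuting the Gibbs weight through the generator using strong spin conservation, which leaves exactly the anticommutator with $\sum_i[L_i(x),L_i(x)^\dagger]$; local detailed balance \eqref{detailed_balance} telescopes this to two boundary charges whose contributions factorize by strict clustering and cancel by translation invariance, and existence of $\tau_t$ with the same Lieb--Robinson data is verbatim from the cited construction since $\mathcal L$ is again of the form \eqref{time_evolution}. Where you genuinely diverge is part (iii): the paper proves \eqref{contracting} by a hands-on finite-volume computation, differentiating $\langle e^{t\mathcal L^*_\Lambda}(A),e^{t\mathcal L^*_\Lambda}(A)\rangle_0$, symmetrizing the jump terms with detailed balance, applying Cauchy--Schwarz to get a manifestly non-positive bulk contribution, and controlling the leftover boundary terms $o(\pm N)$ uniformly in $t$ by the Lieb--Robinson bound before taking $\Lambda\to\Z$; you instead invoke the Kadison--Schwarz inequality for the unital completely positive map $\tau_t^*$ together with invariance of $\omega_\mu$, which is shorter and more structural (the paper explicitly mentions this route but opts for the direct proof because of infinite-volume care). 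Your argument is sound, but note that it quietly relies on \emph{semigroup-level} invariance $\omega_\mu\circ\tau_t^*=\omega_\mu$ (and likewise for $\tau_t$), whereas what is proved directly from \eqref{detailed_balance} is only the generator-level statement $\omega_\mu(\mathcal L^*(A))=0$ on $\mathfrak U_{\rm loc}$; promoting this to the semigroup in infinite volume needs exactly the finite-volume-plus-Lieb--Robinson boundary estimate that the paper's direct proof builds in (its $\epsilon_N$ terms), so you should add that step rather than treat invariance as free. For $k=1$ both you and the paper reduce to $k=0$ by Ces\`aro averaging over translates; your extra attention to the summability of $z\mapsto\langle\tau_t^*(A(z)),\tau_t^*(A)\rangle_0$ via the quasi-local truncations $\sigma_n(\tau_t^*A)$ and the exponential Lieb--Robinson tail is the right way to close that limit and is consistent with how the paper's set-up defines $\tau_t$ on $\mathcal H_1$. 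In short: same mechanism for adjointness and existence, a cleaner abstract argument for contractivity that buys brevity at the cost of having to separately justify state invariance and the $\mathcal H_1$ limit, both of which the paper's more pedestrian derivative estimate handles in one stroke.
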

Explicitly, $\mathcal L$ is obtained from $\mathcal L^*$ by exchanging $b_i\leftrightarrow \b a_i$ and $c_i,d_i,e_i\leftrightarrow \b c_i, \b d_i,\b e_i$ in \eqref{jump_operators}. We believe the semigroups formed by $\tau_t$ and $\tau^*_t$ on $\mathcal H_k$ can be shown to be strongly continuous (but this does not play any role here). As a consequence that $\tau_t^*$ is a bounded operator, the space of extensive conserved charges $\mathcal Q = \{q\in \mathcal H_1:\tau_t^*(q) = q\}$ \cite{doyon_hydrodynamic_2022}, is a closed subspace, and the projection $\mathbb P:\mathcal H_1\to \mathcal Q$ in \eqref{Onsagerintro} makes sense. We interpret $\tau_t$ as the {\em backward time evolution} of the system; there is also a notion of extensive conserved quantities for backward time evolution. Note that $\mathcal L$ {\em does not} take the form the Lindbladian usually takes on density matrices; by contrast, the form above indeed makes sense on local operators in the infinite-volume setting. We show in Appendix \ref{apprev} with certain choices of parameters, $\mathcal L = -\mathcal L^*$, in which case $\tau_t = \tau_{-t}^*$ and time evolution is unitary on $\mathcal H_k$: the dynamics is reversible.

Quantum detailed balance conditions have been discussed \cite{Alicki_1976_detailed_balance,fangola_2007_generator_quantum_markov,carlen_2017_gradient}; Eq.~\eqref{adjointL} is a related formulation in the infinite-volume context.
\proof 
The form of $\mathcal L$ is the same as that of $\mathcal L^*$, exchanging $L_i(x)\leftrightarrow L_i(x)^\dag$. Therefore, by the above discussion, it exists on $\mathfrak U_{\rm loc}$, generates a strongly continuous dynamical semi-group on $\mathfrak U$ with a Lieb-Robinson bound on $\mathfrak U_{\rm loc}$ and with the velocity \eqref{eq:LR_velocity}, and this time evolution gives rise to a well defined evolution operator $\tau_t$ on the local elements of $\mathcal H_k$, $k=0,1$, and on their time translates by both $\tau_t$ and $\tau_t^*$.

What remains to be shown are Eqs.~\eqref{detailed_balance} and \eqref{contracting}. Eq.~\eqref{detailed_balance} follows from evaluating connected correlation functions: using strong spin conservation \eqref{strongconserved} we have
\begin{equation}
    \langle \mathcal L^*(A)^\dag,B\rangle^c
    =
    \langle A^\dag,\mathcal L(B) \rangle^c
    +
    \sum_{x\in\Z,\,i}\langle A^\dag,\{B,[L_i(x),L_i(x)^\dag]\}\rangle^c
\end{equation}
and using \eqref{detailed_balance} the series vanishes by telescopic summation and clustering. The opposite order of $\mathcal L,\mathcal L^*$ is obtained similarly.

For Eq.~\eqref{contracting}, this follows because local detailed balance essentially implies that the Lindbladian is unital, for which contraction properties are known \cite{perez-garcia_2006_contractivity}. However we are in the infinite-volume setting, so care must be taken. We provide a direct proof. We consider $\tau_t^*$, as the same proof holds for $\tau_t$.

It is sufficient to have the proof for $\langle \cdot,\cdot\rangle_0$. Indeed, by translation invariance and the existence of the limit defining $\langle A,A\rangle_1$,
\begin{equation}
    \langle A,A\rangle_1
    = \lim_{\Lambda\to\Z}
    \frc1{|\Lambda|}
    \langle \sum_{x\in\Lambda}A(x),\sum_{x'\in\Lambda}A(x')\rangle_0.
\end{equation}
We consider, for any $A\in\mathfrak U_{\rm loc}$,
\begin{equation}\label{proofbounded1}
    \frc{d}{dt} \langle e^{t\mathcal L^*_\Lambda} (A),e^{t\mathcal L^*_\Lambda} (A)\rangle_0 =
    \omega_\mu(\mathcal L_\Lambda^*(B^\dag)B)
    + \omega_\mu(B^\dag \mathcal L_\Lambda^*(B))
\end{equation}
with $B = e^{t\mathcal L^*_\Lambda} (A)$, and
where differentiability holds because for $\Lambda$ a finite set, we have finite matrices. The right-hand side is, by using strong magnetization conservation and the cyclic property of the trace
\begin{equation}
    -\sum_{x,i} \omega_\mu \Big(L_i(x) B^\dag BL_i(x)^\dag 
    + B^\dag L_i(x)^\dag L_i(x)B
    - L_i(x)^\dag B^\dag L_i(x) B
    - L_i(x) B^\dag L_i(x)^\dag B \Big)
\end{equation}
where the sum on $x$ is over $\Lambda = [-N,N]\cap \Z$. Now bound (using Cauchy-Schwartz inequality)
\begin{equation}
    |\omega_\mu(L_i(x)^\dag B^\dag L_i(x)B)| \leq \sqrt{\omega_\mu(L_i(x)^\dag B^\dag B L_i(x))\omega_\mu(B^\dag L_i(x)^\dag  L_i(x)B)}
\end{equation}
and similarly,
\begin{equation}
    |\omega_\mu(L_i(x) B^\dag L_i(x)^\dag B)| \leq \sqrt{\omega_\mu(L_i(x) B^\dag B L_i(x)^\dag)\omega_\mu(B^\dag L_i(x)  L_i(x)^\dag B)}.
\end{equation}
Writing $\omega_\mu (L_i(x)^\dag L_i(x) B^\dag B) = \frc12 \omega_\mu (L_i(x)L_i(x)^\dag B^\dag B) + \frc12 \omega_\mu (L_i(x)^\dag L_i(x)B^\dag B) + \frc12 \omega_\mu ([L_i(x)^\dag ,L_i(x)]B^\dag B) $, and using local detailed balance and translation-invariance of the state, we have
\begin{equation}
    \sum_{x,i} \omega_\mu (L_i(x) B^\dag BL_i(x)^\dag)
    =
    \frc12 \sum_{x,i} \omega_\mu \Big(L_i(x) B^\dag BL_i(x)^\dag+
    L_i(x)^\dag B^\dag BL_i(x)\Big) + \epsilon
\end{equation}
where $\epsilon = \epsilon(t) = \frc12\langle o(N+1) - o(-N),B^\dag B\rangle^c$ (note that $B$ depends on $t$).
Similarly,
\begin{equation}
    \sum_{x,i} \omega_\mu (B^\dag L_i(x)^\dag L_i(x) B)
    =
    \frc12 \sum_{x,i} \omega_\mu \Big(B^\dag L_i(x) L_i(x)^\dag B^\dag+
    B^\dag L_i(x)^\dag  L_i(x) B \Big) + \b\epsilon
\end{equation}
where $\b\epsilon = \b\epsilon(t) = \frc12\langle o(N+1) - o(-N),B B^\dag\rangle^c$.
With
\begin{align}
    r_1&=\sqrt{\omega_\mu(L_i(x)^\dag B^\dag B L_i(x))}\\
    r_2&=\sqrt{\omega_\mu(L_i(x)B^\dag B L_i(x)^\dag)}\\
    r_3 &= \sqrt{\omega_\mu( B^\dag L_i(x)^\dag L_i(x) B )}\\
    r_4 &= \sqrt{\omega_\mu( B^\dag L_i(x)L_i(x)^\dag B )}
\end{align}
(keeping $x,i$ implicit), the absolute value of the right-hand side of \eqref{proofbounded1} is
\begin{equation}
    \leq - \frc12 \sum_{x,i}
    \Big(
    r_1^2 + r_2^2 + r_3^2 + r_4^2
    - 2r_1r_3 - 2r_2r_4
    \Big)+|\ep+\b\ep|
    =
    - \frc12 \sum_{x,i}
    \Big(
    (r_1-r_3)^2 + (r_2-r_4)^2
    \Big)+|\ep+\b\ep|.
\end{equation}
Therefore, we have
\begin{equation}
    \Big|\frc{d}{dt} \langle e^{t\mathcal L^*_\Lambda} (A),e^{t\mathcal L^*_\Lambda} (A)\rangle_0\Big| \leq |\ep+\b\ep|.
\end{equation}
By the Lieb-Robinson bound, fore every $t>0$ there is $\ep_N>0$ such that $|\ep(t')+\b\ep(t')|\leq \ep_N$ uniformly on $t'\in[0,t]$, and such that $\lim_{N\to\infty} \ep_N = 0$. Hence, integrating from $t=0$, we obtain
\begin{equation}
    \Big|\langle e^{t\mathcal L^*_\Lambda} (A),e^{t\mathcal L^*_\Lambda} (A)\rangle_0 - \langle A,A\rangle_0\Big| \leq  \ep_N t.
\end{equation}
Taking the limit $\Lambda\to\Z$ (implying $N\to\infty$ and therefore $\ep_N\to0$), the result follows.
\eproof

\section{Lindbladian Onsager coefficient} 

\subsection{Extensive conserved quantities and chaoticity} In general, the relevant degrees of freedom for the hydrodynamic scale are the extensive conserved quantities, formally $Q_n = \sum_{x \in \Z}q_n(x)$, afforded by the microscopic dynamics, $\mathcal{L}^*(Q_n)=0$. With $q_n(x,t)$, $j_n(x,t)$ conserved densities and their currents, we have conservation laws $\partial_t q_n(x,t) + j_n(x,t)- j_n(x-1,t)=0$; it is those that control hydrodynamics. In fact, $Q_n$ as written above is formal, and does not exist in $\mathfrak U$. The correct construction of the space of extensive conserved quantities, which includes those with local or quasi-local densities ($q_n(x)\in\mathfrak U_{\rm loc}$ or $\mathfrak U$) and possibly more, and the existence of the local currents, are studied in \cite{doyon_hydrodynamic_2022}. This space, $\mathcal Q$, is in fact a Hilbert space, a closed subspace of $\mathcal H_1$ (because $\tau_t^*$ is bounded on $\mathcal H_1$), thus it is complete in the topology induced by $\mathcal H_1$. It is shown in \cite{doyon_hydrodynamic_2022} that, for Hamiltonian systems, it is this full Hilbert space $\mathcal Q$ that contributes to the Euler scale; it determines the exact Drude weight via a projection mechanism. Estimating the diffusive strength of any hydrodynamic mode requires, as per the Green-Kubo formula, subtracting its projection onto ballistic modes, and thus requires the full knowledge of this Hilbert space.

In Lindbladian systems, the closed space of extensive conserved quantities is likely to still play a crucial role, but the question of how is beyond the scope of this paper. Still, in the Green-Kubo formula \eqref{Onsagerintro} it is natural to project out time-invariant quantities. Systems with Lindbladian \eqref{time_evolution} may admit an infinity of conserved quantities, e.g.~if integrable, which may happen also in the presence of incoherent processes \cite{ziolkowska_2020_yang_baxter,essler_2020_integrablility_lindbladian}; or a finite number, in which case we may characterise the system as being {\em chaotic}. In many integrable models, one knows how to write down the conserved charges iteratively, but proving that this set is complete, and no other charges exist, remains a conjecture.
Likewise, proving that a system is chaotic remains an open problem. In our understanding, it is however expected that non-chaotic models are of measure $0$ in the space of interactions.

We therefore assume that the only conserved extensive quantity of \eqref{time_evolution} is the magnetization $Q_1 = M= \sum_{x \in \Z} σ^3_x$; that is $\mathcal Q = {\rm span}(M)$. More precisely, in fact, we only need to assume that the spin current $j$ (as an element of $\mathcal H_1$) projects onto $M$; physically, we assume that spin transport be chaotic. We emphasize that this assumption is only used in addressing the projection operator in the Green-Kubo formula \eqref{Onsagerintro}. Once the formula is written for the projection onto magnetization, our lower bound result is rigorous. Note that $M$ is also conserved by the backward time evolution $\tau_t$.

\subsection{Onsager coefficient and its decomposition}

The spin current $j(x)$ for spin density $s(x) = \sigma_x^3$ is defined by
\begin{equation} \label{eq:spin_conservation}
\mathcal L^*(s(x)) + j(x)-j(x-1) = 0
\end{equation}
and decomposes into a Hamiltonian (coherent) and Lindbladian (incoherent) parts, $j(x) = j^{\rm H}(x) + j^{\rm L}(x)$. Both can be straightforwardly evaluated, but we will only need the incoherent part (here we have used \eqref{condition} to slightly simplify the expression):
\begin{align}
    j^{\rm L}(x) &= \sum_i\Big(
    2|b_i|^2 P_{x}^+P_{x+1}^- - 2|a_i|^2 P_x^-P_{x+1}^+ \n & \quad + \big(2a_i(\b d_i-\b c_i) + \b e_i a_i - e_i \b b_i\big)\sigma_x^+\sigma_{x+1}^-
    \n & \quad + \big(2\b a_i(d_i-c_i) + e_i \b a_i - \b e_i b_i\big)\sigma_x^-\sigma_{x+1}^+
    \Big) \label{Lcurrent}
\end{align}
where $P^\pm_x = (1\pm\sigma_x^3)/2$ are local projections onto positive (up) and negative (down) spins. Our goal is to estimate the diffusive transport coefficient \eqref{Onsager} for $j(x)$.

With the chaoticity assumption, formula \eqref{Onsagerintro} simplifies to ($\mathfrak L = \mathfrak L_{11}$):
\begin{align}
   & \mathfrak L = \liminf_{T\to\infty} \frc1{T}\int_0^T\dd t \int_0^T\dd t'\,\sum_{x\in\Z}
    \langle j^-(x,t),j^-(0,t')\rangle^c, \label{Onsager1} \\
    &\hspace{5cm} j^-(x,t) = j(x,t) - s(x,t) \chi^{-1} \langle M,j(0,0)\rangle^c \label{eq:proejcted_spin_current_chi}
\end{align}
where
\begin{equation}
    \chi = \langle M,s(0,0)\rangle^{\rm c}
\end{equation}
is the magnetic susceptibility.

Note that $\sum_{x\in\Z} \langle j^-(x,t),j^-(0,t')\rangle^c = \langle j^-(t),j^-(t')\rangle_1$ is the $\mathcal H_1$ inner product, that $\langle M,a(0,0)\rangle^{\rm c} = \langle s,a\rangle_1$, so that $s(x,t) \chi^{-1} \langle M,j(0,0)\rangle^c = (\mathbb Pj)(x,t)$ is indeed the projection within the Hilbert space $\mathcal H_1$. The quantity of which we take the limit in \eqref{Onsager1} exists and is finite, by the Lieb-Robinson bound. Further, expression \eqref{Onsager1} is manifestly non-negative, $\mathfrak L\geq 0$, as, by averaging over positions using translation invariance, it is the expectation value of the square of a hermitian operator (see e.g.~\cite{doyon_thermalization_2017}). We take the infimum limit $\liminf_{T\to\infty}$ (\cite[Eqs 39, 58]{doyon_2019_diffusion}), as the limit itself, in \eqref{Onsager1}, might not exist. The result may also diverge to infinity (indicating superdiffusion). By contrast to the Hamiltonian dynamics, because Lindbladian time evolution is not an automorphism of the operator algebra, $A(t) B(t) \neq \tau^*_t(AB)$, even though the state is stationary, $\omega (\tau_t^*(A)) = \omega(A)$, the quantity $\langle j^-(x,t),j^-(0,t^{\prime})\rangle^c$ is not a function of $t-t^{\prime}$.

What is the interpretation of formula \eqref{Onsager1}?

Technically diffusion occurs within a derivative expansion of the current, accounting for the principle that the system tends towards entropy maximization. In a long-wavelength state $\langle\cdots\rangle$, one writes phenomenologically $\langle j(x,t)\rangle = \mathsf j(\langle s(x,t)\rangle) -\frac12 \mathfrak D(\langle s(x,t)\rangle)\partial_x \langle s(x,t)\rangle + \ldots$ where $\mathsf j(\mathsf s) = \omega_\mu(j)$ for $
\mu$ such that $\mathsf s = \omega_\mu(s)$. It is a simple matter to evaluate the current and velocity: an explicit calculation of the trace is possible as it factorises into the individual sites, and we obtain
\begin{equation}
    \mathsf j = \omega_\mu(j^{\rm L}(x)) = \sum_i \frc{|b_i|^2-|a_i|^2}{2\cosh^2\mu}.
\end{equation}
As $\mathsf s = \omega_\mu(\sigma_x^3) = \tanh\mu$, we can express the current as a function of the spin density, obtaining
\begin{equation}\label{js}
    \mathsf j(\mathsf s) = 
    \sum_i(|b_i|^2-|a_i|^2)
    \frc{1-\mathsf s^2}2.
\end{equation}
Linear response arguments for the (phenomenologically justified) diffusive-scale hydrodynamic equation $\p_t \mathsf s(x,t) + v(\mathsf s(x,t)) \p_x \mathsf s(x,t)  = \frc12\p_x (\mathfrak D(\mathsf s(x,t)) \p_x \mathsf s(x,t))$ then give an equation for the correlation $\langle s(x,t),s(0,0)\rangle^c$, solved as a Gaussian centered around the velocity $v(\mathsf s)= d \mathsf j(\mathsf s)/d \mathsf s$:
\begin{equation}\label{diffusiveform}
    \langle s(x,t),s(0,0)\rangle^c \sim \frc{χ}{\sqrt{2\pi \mathfrak D_{\rm norm} t}}\, e^{-(x-vt)^2/(2\mathfrak D_{\rm norm}t)}\qquad \mbox{(hydrodynamic linear response)}
\end{equation}
where we recall that $A(x,t) = \tau^*_t(A(x))$ is the Heisenberg-picture time evolution of the local observable $A(x)\in \mathfrak{U}_{\rm loc}$. Here we have renamed $\mathfrak D\to\mathfrak D_{\rm norm}$ in order to distinguish it from the ``irreversibility diffusion" discussed below; this is {\em normal} diffusion.

Recall that we have different forward and backward time evolutions. According to hydrodynamic intuition, both should lead to diffusion, thus there should be diffusion associated to irreversibility as well. Time evolution $\tau_t$ leads to the opposite average current $-\mathsf j(\mathsf s)$, Eq.~\eqref{js}, thus the opposite hydrodynamic velocity. Evolving forward, then backward in time, and phenomenologically assuming diffusive extension around the end-point of the resulting ballistic trajectory, we thus expect, in analogy to hydrodynamic linear response,
\begin{equation}\label{diffusiveirr}
    \langle \tau_t \tau_t^*s(x),s(0,0)\rangle^c \sim \frc{χ}{\sqrt{4\pi \mathfrak D_{\rm irr} t}}\, e^{-x^2/(4\mathfrak D_{\rm irr}t)}\qquad \mbox{(hydrodynamic linear response).}
\end{equation}
This is {\em irreversibility} diffusion.

Using the relation
\begin{equation}
    v(\mathsf s) = \frc{d \mathsf j}{d\mu} \Big/ \frc{d\mathsf s}{d\mu} = \chi^{-1}\langle M,j(0,0)\rangle^c
\end{equation}
we write from \eqref{eq:proejcted_spin_current_chi}
\begin{equation}
    j^-(x,t) = j(x,t) - v s(x,t)
\end{equation}
and \eqref{Onsager} (or \eqref{Onsager1}) with \eqref{eq:proejcted_spin_current} follows. Using conservation laws, we obtain the physical meaning of \eqref{Onsager}. Following the steps of \cite[App A]{DeNardis_Doyon_2019_diffusion_GHD} (the steps can be made fully rigorous, using discrete-space integration by parts $\sum_x x \p_x j(x) = -\sum_x j(x)$, $\sum_x x^2 \p_x j(x) = -\sum_x (1+2x)j(x)$, with $\p_x j(x) = j(x)-j(x-1)$, and the Lieb-Robinson bound and clustering of correlation functions for boundary terms to vanish), we find that it decomposes into (the infimum limit of) the difference
\begin{equation}
    \mathfrak L := \liminf_{t\to\infty} (\mathfrak L_{\rm norm}(t)-\mathfrak L_{\rm irr}(t))
\end{equation}
of a normal linear-response diffusion strength
\begin{align}
    \mathfrak L_{\rm norm}(t) &=
    \frc1{t} \sum_{x\in\Z} (x^2-(vt)^2) \langle
    s(x,t),s(0,0)\rangle^c\\ &= 
    \frc1t \sum_{x\in\Z} (x^2 - (vt)^2) \frc{d}{d\mu}\frc{\Tr \Big(e^{M + \mu s(0)} s(x,t)\Big)}{\Tr e^{M + \mu s(0)} }\Bigg|_{\mu=0},
\end{align}
and an irreversibility diffusion strength
\begin{align}
    \mathfrak L_{\rm irr}(t)&=
    \frc1{2t} \sum_{x\in\Z} x^2\langle \tau_{t}\tau^*_t( s(x)),s(0,0)\rangle^c
    \\ &= 
    \frc1{2t} \sum_{x\in\Z} x^2\frc{d}{d\mu}\frc{\Tr \Big(e^{M + \mu s(0)} \tau_{t}\tau^*_t(s(x))\Big)}{\Tr e^{M + \mu s(0)} }\Bigg|_{\mu=0}
\end{align}
the linear response of the spin evolved forward, and then backward, for the same time. If the correlation functions satisfy hydrodynamic diffusive equations, then \eqref{diffusiveform} and \eqref{diffusiveirr} hold and we have Einstein relations for the limits $t\to\infty$:
\begin{equation}
    \mathfrak L_{\rm norm}(\infty) = \mathfrak D_{\rm norm}\chi,\quad \mathfrak L_{\rm irr}(\infty) = \mathfrak D_{\rm irr}\chi\quad
    \mbox{(hydrodynamic linear response)},
\end{equation}
which provides the physical meaning of $\mathfrak L$. With the choices of parameters making the dynamics reversible, Appendix \ref{apprev}, we have $\tau_t = \tau_{-t}^*$ and therefore $\mathfrak L_{\rm irr}=0$.

\section{Main result: strictly positive spin diffusion} 
To prove that the system exhibits at least diffusive transport, it suffices to find a lower, strictly positive, bound for $\mathfrak{L}$. In \cite{doyon_2019_diffusion}  a general bound is obtained in one-dimensional systems, under the assumption of clustering of $n$-th order connected correlations under space-time translations, which until recently remained without a proof. In finite range quantum spin chains the Gibbs states for nonzero temperatures are exponentially clustering \cite{Araki:1969bj}, and the state $\omega_\mu$ is manifestly clustering because of ultra-locality of $M$. Combined with the Lieb-Robinson bound this implies space-like clustering for two-point connected correlations, \cite[Appendix C]{ampelogiannis_long-time_2023}. Recently in \cite{ampelogiannis_2024_clustering} the authors established  clustering of $n-$th order cumulants from two-point clustering, which is slightly different from the assumed clustering in \cite{doyon_2019_diffusion}. In particular, applying \cite[Theorem V.4]{ampelogiannis_2024_clustering} in the case of exponential clustering, we find that the $n$-th order connected correlations of $ω_μ$ satisfy:
\begin{equation} \label{eq:clustering}
\arraycolsep=1.4pt\def\arraystretch{1.5}
\begin{array}{*3{>{\displaystyle}l}p{5cm}}
    \langle  A_1(x_1,t_1), A_2(x_2,t_2), \ldots ,A_n(x_n,t_n) \rangle^c \leq C_{A_1,\ldots,A_n}  
    e^{ -\nu z}, \\
     \text{where } z=\max_i \min_j \{ \dist(A_i(x_i),A_j(x_j))\}
    \end{array}
\end{equation}
for any $n\in \N$, $A_1,\ldots,A_n \in \mathfrak{U_{\rm loc}}$, $x_1,\ldots,x_n\in \Z$ and $t_1,\ldots ,t_n \in [-υ^{-1}z+1, υ^{-1}z-1]$, where $C_{A_1,\ldots,A_n},\nu>0 $ constants \footnote{The constant $C_{A_1,\ldots,A_n}$  depends linearly on the norms of the observables and polynomially on the sizes of their supports}. Having established \Cref{eq:clustering}, we can slightly modify the proof of \cite[Theorem 5.1]{doyon_2019_diffusion}, see \Cref{appendix4}, to rigorously obtain the bound\footnote{Note that there is a difference of a factor of $2$ with \cite[Theorem 5.1]{doyon_2019_diffusion}, this is because in the present set-up the Onsager coefficient \eqref{Onsager1} is defined as an integral over $t\geq 0$ instead of  $\R$}:
\begin{equation} \label{bound}
    \mathfrak{L} \geq \mathfrak L_{\rm lower} := \frac{|\langle M ,M,j^-\rangle^c|^2}{8υ_{\rm LR} (\langle M,s \rangle^c)^2}.
\end{equation}
Here $M=\sum_x{σ_x^3}$ is the magnetisation, $s=σ^3_0$ the spin density\footnote{The result is irrespective of the site $x\in \Z$ of the density}, $υ_{LR}$ the Lieb-Robinson velocity \eqref{eq:LR_velocity} and $j^-$ is the projection of the local spin currect \eqref{eq:proejcted_spin_current}. If the three point connected correlation $\langle M,M,j^-\rangle^c$ is non-zero, then the Onsager coefficient is strictly positive. We show the following:

\begin{theorem} \label{theorem:1}
    Consider a quantum spin chain with Lindbladian evolution \eqref{time_evolution} with nearest neighbor Hamiltonian interaction \eqref{hamiltonian_density} and Lindblad operators \eqref{jump_operators}, the spin current \eqref{eq:spin_conservation}, and the associated Onsager coefficient \eqref{Onsager}. It follows that  for all interaction parameters $α\in \C$, $β,γ \in \R$, $a_i,b_i,c_i,d_i,e_i \in \C$ satisfying \eqref{condition} and $\sum_i(|a_i|^2 - |b_i|^2)\neq 0$, the Onsager coefficient is strictly positive $\mathfrak{L}\geq\mathfrak{L}_{\rm lower} > 0$. 
\end{theorem}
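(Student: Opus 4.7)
The plan is to invoke the general bound \eqref{bound}---established by the projection argument of \cite{doyon_2019_diffusion} made rigorous with the cumulant-clustering estimates of \cite{ampelogiannis_2024_clustering} (see \Cref{appendix4})---and to show that, under the stated hypothesis, its right-hand side is strictly positive. Since the denominator $8v_{\rm LR}(\langle M,s\rangle^c)^2$ is finite and non-zero (the susceptibility $\chi=\langle M,s\rangle^c$ is positive in any non-trivial Gibbs state), the task reduces to showing that the third cumulant $\langle M,M,j^-\rangle^c$ does not vanish.

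I would work at $\mu=0$, where the Gibbs state $\omega_0$ is the tracial state that factorises across sites. From the closed-form expression \eqref{js}, the hydrodynamic velocity $v(\mathsf s) = \mathsf j'(\mathsf s) = \sum_i(|a_i|^2-|b_i|^2)\,\mathsf s$ is linear in $\mathsf s$, so $v(0)=0$, and consequently the projected current \eqref{eq:proejcted_spin_current_chi} reduces to $j^-=j$ at $\mu=0$. A direct computation also gives $\chi = \sum_x \langle \sigma_x^3,\sigma_0^3\rangle^c = 1$ since distinct sites are uncorrelated in the tracial state.

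To evaluate the cumulant, the plan is to exploit the standard generating-function identity
\begin{equation*}
    \langle M,M,A\rangle^c \;=\; \frac{d^2}{d\mu^2}\omega_\mu(A)\Big|_{\mu=0},
\end{equation*}
valid for any local $A\in\mathfrak U_{\rm loc}$. This can be obtained by differentiating the finite-volume Gibbs averages $\mathrm{tr}(e^{\mu M_\Lambda}A)/\mathrm{tr}(e^{\mu M_\Lambda})$ twice and passing to the thermodynamic limit; the exchange of limits is justified by strong spin conservation \eqref{strongconserved} together with the exponential clustering \eqref{eq:clustering}, which makes the formal series $\sum_{x_1,x_2}\langle \sigma^3_{x_1},\sigma^3_{x_2},A\rangle^c$ absolutely convergent. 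Applying this to $A=j$ at $\mu=0$, with $\omega_\mu(j)=\mathsf j(\tanh\mu)$ from \eqref{js} and using the chain rule through $\mathsf s=\tanh\mu$ (so $\mathsf s'(0)=1$, $\mathsf s''(0)=0$), together with $v(0)=0$, yields
\begin{equation*}
    \langle M,M,j^-\rangle^c \;=\; \mathsf j''(0) \;=\; \sum_i(|a_i|^2-|b_i|^2),
\end{equation*}
which is precisely the flux curvature $v'(\mathsf s)$ entering \eqref{boundintro}. By the hypothesis of the theorem this quantity is non-zero, hence substituting in \eqref{bound} gives $\mathfrak L \geq \mathfrak L_{\rm lower} = \bigl(\sum_i(|a_i|^2-|b_i|^2)\bigr)^2 / (8v_{\rm LR}) > 0$.

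The proof is conceptually short because the heavy lifting---the projection bound \eqref{bound} and the cumulant-clustering estimates that feed into it---is already in place; the present theorem reduces to an explicit cumulant computation in a product state. The only step requiring some care, though not really an obstacle, is the justification of the cumulant-derivative identity for the extensive, non-convergent quantity $M$: this is routine and proceeds via finite-volume approximants $M_\Lambda$, mirroring the finite-volume-to-thermodynamic-limit arguments already deployed in Section~\ref{section:set-up} for the definition of $\mathcal H_1$ and the contractivity of $\tau_t^*$. No additional analytic input is needed.
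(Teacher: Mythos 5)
Your strategy is the same as the paper's: invoke the projection bound \eqref{bound} and show the third cumulant $\langle M,M,j^-\rangle^c$ is non-zero by expressing cumulants involving $M$ as $\mu$-derivatives of Gibbs averages of the current, using \eqref{js}. Your computation at $\mu=0$ is correct: there $v(0)=0$ so $j^-=j$, $\chi=1$, and $\frac{d^2}{d\mu^2}\omega_\mu(j)\big|_{\mu=0}=\mathsf j''(0)=\sum_i(|a_i|^2-|b_i|^2)$, giving a strictly positive bound.

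However, you have restricted the proof to the single state $\mu=0$, whereas the theorem concerns the Onsager coefficient defined in an arbitrary Gibbs state $\omega_\mu$, and the paper's proof (and its explicit bound \eqref{lowerexplicit}, which carries the factor $\cosh^4\mu$) covers all chemical potentials. At $\mu\neq 0$ your simplifications $j^-=j$ and $\chi=1$ fail, and you must handle the full expression $\langle M,M,j^-\rangle^c = \langle M,M,j\rangle^c - \chi^{-1}\langle M,M,s\rangle^c\,\langle M,j\rangle^c$; the paper does this by the change of variables $\mu\to\mathsf s$, which collapses it to $\chi^2\,\mathsf j''(\mathsf s)$ and hence to the $\mu$-independent flux curvature \eqref{fluxcurvature} — the same chain-rule mechanics you already use at $\mu=0$, so the extension is immediate, but as written your argument does not establish the general statement. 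A second, minor omission: your identity $\omega_\mu(j)=\mathsf j(\tanh\mu)$ tacitly assumes the coherent part $j^{\rm H}$ of the current has vanishing Gibbs average (Eq.~\eqref{js} in the paper is the average of $j^{\rm L}$ only); this is true — trivially in the factorized state $\omega_\mu$, or by the general result of \cite{kobayashi_2022_vanishing_currents} which the paper cites at this point — but it should be stated, since $\langle M,M,j\rangle^c$ involves the full current $j=j^{\rm H}+j^{\rm L}$.
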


\begin{proof}
We proceed to calculate the bound \eqref{bound} for \eqref{hamiltonian_density}, \eqref{jump_operators}. Consider the connected correlation:
\begin{equation} \label{eq:numerator}
    \langle M, M,j^-\rangle^c = \langle M,M, j \rangle^c - χ^{-1}\langle M, M, s\rangle^c \langle M, j \rangle^c.
\end{equation}
As $\langle M,a\rangle^c = d\mathsf a/dμ$ and $\langle M,M,a\rangle^c = d^2\mathsf a/dμ^2$ for any observable $a$ with average $\mathsf a=\omega_μ(a)$, changing variable $μ\to \mathsf s=\omega_μ(s)$, the bound \eqref{bound} can then be written  in terms of the flux curvature $\mathsf j''(\mathsf s) = v'(\mathsf s)$ as
\begin{equation}\label{boundjprime}
    \mathfrak L_{\rm lower} = \frc{(\chi \mathsf j'')^2}{8v_{\rm LR}},\quad
    ' = \frac{d}{d\mathsf s}
\end{equation}
(this is valid in general in chaotic systems). Therefore, as
\begin{equation}\label{chiexplicit}
    χ = \frc1{(\cosh\mu)^{2}}>0,
\end{equation}
diffusion is strictly positive if and only if the flux curvature is non-zero (see also \cite[Eq 1]{doyon_2019_diffusion}). This above formula shows that if the hydrodynamic mode has state-dependent velocity $v(\mathsf s)$, then the above bound is strictly positive. With a state-dependent velocity, the hydrodynamic trajectory of the mode is affected by macroscopic fluctuations of the initial state, leading to diffusive effects. We interpret the bound above as arising from this diffusion mechanism.

The bound \eqref{bound} can be calculated by elementary means. First, note that by the general result of \cite{kobayashi_2022_vanishing_currents} the Hamiltonian part $j^H$ of the spin current \eqref{eq:spin_conservation}, vanishes in the Gibbs state $\omega_\mu(j^{\rm H}(x)) = 0$. Thus our bound vanishes in generic Hamiltonian spin chain, both for energy and spin: the lack of ballistic transport does not allow the above diffusion mechanism to apply. However the incoherent processes lead to non-vanishing current and non-zero diffusion strength.

Indeed, we calculate the flux curvature 
\begin{equation}\label{fluxcurvature}
    v'(\mathsf s) = \mathsf j''(\mathsf s) = \sum_i(|a_i|^2-|b_i|^2)
\end{equation}
resulting in a lower bound for the spin-spin Onsager coefficient:
\begin{equation}\label{lowerexplicit}
   \mathfrak L_{\rm lower} = \frac{  \big(  \sum_i(|a_i|^2-|b_i|^2)  \big)^2 }{ 32eζ(2) \tilde{V} \cosh^4(\mu)}    
\end{equation}
where $\t V$ is defined in \eqref{Vtilde}. We note that as soon as there is incoherent spin transport, $\sum_i|a_i|^2\neq \sum_i|b_i|^2$, spin is an interacting ballistic mode as according to \eqref{js} the hydrodynamic velocity $v = \mathsf j'$ depends on the state, and we have a strictly positive diffusion bound. This shows \Cref{theorem:1}.
\end{proof}
 
\section{Conclusion} 
We have shown that, in every translation-invariant chaotic nearest-neighbor open quantum spin-$1/2$ chain with strong spin conservation and local detailed balance, the spin-spin Onsager coefficient is strictly positive whenever there is incoherent spin transport (Eq.~\eqref{lowerexplicit}). A chaotic spin chain is one for which the space of extensive conserved charges is one-dimensional (spanned by the magnetisation), therefore for which spin is phenomenologically expected to satisfy a single-component hydrodynamic equation. We expect that for almost every set of interaction parameters the spin chain be chaotic, thus admitting positive diffusion (\Cref{theorem:1}). Rigorously establishing that chaoticity is generic remains a crucial open problem.

This is the first rigorous proof of strictly positive diffusion in chaotic spin chains, albeit for open chains, where the bound is for the difference of normal diffusion strength, and a diffusion strength associated to irreversibility of the dynamics. It is based on a general theorem of projection onto quadratically extensive charges, extending \cite{Prosen_2014_diffusion_bound} and obtained by combining results of \cite{doyon_2019_diffusion} and \cite{ampelogiannis_2024_clustering}. The bound encodes the mechanism by which diffusive effects arise because of fluctuations of hydrodynamic trajectories due to macroscopic, thermodynamic fluctuations in the initial state. The methods are easily extended to a finite number of explicit extensive charges, higher spins, and finite- or short-range interactions, and can also be applied to closed systems such as unitary quantum circuits and cellular automata, and we expect the results still hold. Importantly, the methods do not work, for instance, for energy or spin transport in generic Hamiltonian quantum chains, because in Gibbs states there can be no persistent (ballistic) energy or spin current \cite{kobayashi_2022_vanishing_currents}: hydrodynamic velocities vanish, forbidding the above mechanism.

According to the theory of fluctuating hydrodynamics \cite{spohn_2014_nonlinear}, the above mechanism should in fact lead to superdiffusion. Indeed, by expanding the Gibbs current around a spacetime stationary background as $\mathsf j(\mathsf s+\delta\mathsf s) \approx \mathsf j + \mathsf j' \delta \mathsf s + \mathsf j'' \delta \mathsf s^2/2$ and adding microscopic fluctuations, the hydrodynamic equation becomes $\p_t \delta \mathsf s + v\p_x\delta \mathsf s+ \mathsf j'' \p_x\delta \mathsf s^2 = \frac12 \mathfrak D \p_x^2\delta \mathsf s + \p_x \xi$ where $\xi$ is an appropriately normalized delta-correlated white noise in space and time. This gives rise to the KPZ universality class if $\mathsf j''\neq 0$, in which case $\langle s(x,t),s(0,0)\rangle^c$ expands {\em superdiffusively} ($\mathfrak L=\infty$). According to \eqref{boundjprime}, $\mathsf j''\neq 0$ is exactly the condition for our diffusion lower bound to be strictly positive. Thus, although we have shown strict positivity of the Onsager coefficient, in fact, on phenomenological grounds, it is expected to be infinite. Rigorously showing $\mathfrak L=\infty$ in generic ballistic spin chains is another crucial open problem; results of \cite{doyon_2019_diffusion} based on projections along with assumptions on the form of correlations may give a way forward.

\bmhead{Acknowledgment} We are grateful to Curt von Keyserlingk for pointing out a calculation mistake in an earlier version of this paper, and the general result of vanishing persistent currents of \cite{kobayashi_2022_vanishing_currents}. In the earlier version, a bound was attempted for Hamiltonian spin chains. But the bound obtained by our technique must be zero in this case, as the energy current vanishes by the aforementioned result. The work of BD has been supported by the Engineering and Physical Sciences Research Council (EPSRC) under grant number EP/W010194/1.

\begin{appendices}

\section{Proof of the Onsager Coefficient bound} \label{appendix4.1}
The proof of \eqref{bound} is done in \cite[Section 5.2]{doyon_2019_diffusion} and requires showing that $(M)_n \coloneqq (1-\mathbb{P}) \sum_{-n}^n σ^3_x σ^3_0$ (in the limit $n \to \infty$, as an element of $\mathcal{H}_1$) is be quadratically extensive conserved charge, i.e.\ it satisfies Points $1^{\prime},2^{\prime},3^{\prime}$ in \cite[Section 4.2]{doyon_2019_diffusion}. In the present set-up, we have two differences; we have a slightly different clustering property \eqref{eq:clustering} and  non-Hamiltonian evolution which does not act as an automorphism. We resolve these in the next two subsections.

\subsection{Time evolution}
Here we consider the proof of  Point $3^{\prime}$ \cite[Section 4.2]{doyon_2019_diffusion}: there exists $k>0$
\begin{equation} \label{eq:appendix_time_1}
    \lim_n \sup_{t_1,t_2 \in [0,kn]} | \langle (M)_n, A(t_1)-A(t_2) \rangle_{1}|=0
\end{equation}
for $A \in \mathfrak{U}_{\rm loc}$ and $(M)_n =  \sum_{-n}^n σ^3_x σ^3_0 $. This is shown in \cite{doyon_2019_diffusion} for Hamiltonian time evolution $τ_t:\mathfrak{U}  \xrightarrow{\sim} \mathfrak{U}$ and uses the  property $\langle A(t), B\rangle_1 = \langle A, B(t)\rangle_1 $ which holds by time invariance of the state $ω$ and the fact that $τ_t$ is an automorphism: $τ_t^*(AB)= τ_t(A)^* τ_t(B)^*$. Then, \eqref{eq:appendix_time_1} is written as an integral of a time derivative and time evolution is moved on $(M)_n$, where we can use the conservation law \cite[Eq. 121]{doyon_2019_diffusion}. However, Lindbladian  time evolution does not preserve the algebra, hence we are not generally able to move $τ_t^*$ around in $\langle A(t), B\rangle_1 $. However, we can take advantage of \eqref{adjointL} to write:
\begin{equation}
    \langle (M)_n , \mathcal{L}^*A \rangle_1 = \langle \mathcal{L}(M)_n , A \rangle_1
\end{equation}
For $\mathcal{L}(M)_n = \mathcal{L} \big( \sum_{-n}^n σ^3_x σ^3_0 \big)$ we can use an ``approximate Leibniz rule'' satisfied by $\mathcal{L}$ on observables supported on just one site:
\begin{equation}
    \mathcal{L}( q(x) q(y)) = \mathcal{L}(q(x)) q(y) + q(x)\mathcal{L}(q(y)) + e(x,y) , \quad q(x) \in \mathfrak{U}_{ \{x\}},q(y) \in \mathfrak{U}_{ \{y\}}
\end{equation}
with the ``error term" $e(x,y)\in \mathfrak{U}_{\rm loc}$ is supported on a finite region containing $x$ and $y$, vanishes for $|x-y|\geq 2$, and satisfies $\sum_{y=x-1}^{x+1} e(x,y) = \sum_{x=y-1}^{y+1} e(x,y)=0$ \footnote{This is  deduced directly from the fact that $\mathcal{L}$ is a sum over nearest-neighbor terms}. Explicitly, we find $e(x,x-1) = s(x)j(x-1)-j(x-1)s(x-1)$, $e(x,x) = (j(x)-j(x-1))s(x) + s(x)(j(x)-j(x-1))$, $e(x,x+1) = j(x)s(x+1)-s(x)j(x)$, and one can check that it satisfies the sum rules by using the relation $j(x)(s(x)+s(x+1)) = 0$ (recall that $s(x) = \sigma_x^3$). Then, we have
\begin{equation}
   \sum_{x=-n}^n\langle  \mathcal{L}(σ^3_x) σ^3_0 + σ^3_x\mathcal{L}(σ^3_0), A \rangle_1
\end{equation}
and we can use the conservation laws $\mathcal{L}(σ^3_x) = {j}(x) - {j}(x-1)$ (that is, the current with respect to backward time evolution is $-j$), use space invariance of $\langle \cdot,\cdot \rangle_1$ and perfom the sum over $x$ to get as in \cite[Eq. 121]{doyon_2019_diffusion}:
\begin{equation}
     \langle \mathcal{L}(M)_n, A \rangle_1 \big|_{t=0} =  -\langle (M)_n, \mathcal{L}^*A \rangle_1 \big|_{t=0}= \langle  {j}(-n) σ^3_0 -  {j}(n) σ^3_0  -σ^3_{-n} {j}_0 +  σ^3_{n} {j}_0  , A \rangle_1.
\end{equation}
The rest of the proof follows from \cite[Eq. 122]{doyon_2019_diffusion}, where we are left with connected correlations of the form $\int_{t_1}^{t_2} \sum_{x\in \Z} \langle  j(n,0) σ^3(0,0) , A(x,t) \rangle^c  dt$, that are discussed below and vanish by clustering.

\subsection{Clustering} \label{appendix4}
The proof of the Onsager matrix bound in \cite[Theorem 5.1]{doyon_2019_diffusion} is done under the assumption that for any $A_1,\ldots,A_n \in \mathfrak{U}_{\rm loc}$, any $x_1,\ldots, x_n \in \Z$, $t_1,\ldots , t_n \in \R$ there exists $b,c>0$ so that
\begin{equation}
    \langle A_1(x_1,t_1),\ldots , A_n(x_n,t_n) \rangle^c \leq f( \max_i \min_j \{ w_{ij} \})
\end{equation}
where $w_{ij}=|x_i-x_j| - υ_{\rm LR} |t_i-t_j|$, $f(z)= c(b+z)^{-d}$. This is slightly different from the rigorously established clustering property \eqref{eq:clustering} shown in \cite{ampelogiannis_2024_clustering}, which instead holds for times $|t_i| \leq υ^{-1}z$, $z=\max\min \{\dist(A_i(x_i),A_j(x_j)) \}$ and bounds the connected correlations by a function of the max-min spatial distances and not the max-min space-time distances. However, the proof remains essentially unchanged.  To see this, first note that \eqref{eq:clustering} can be expressed in terms of the distances $|x_i-x_j|$ instead of  $\dist(A_i(x_i),A_j(x_j))$ at the cost of a larger constant $C_{A_1,\ldots, A_n}$ that takes into account the diameter of $\supp(A_i)\cup\supp(A_j)$, see for example \cite[Eq 121, Appendix C]{ampelogiannis_2023_almost}. Note also that we only need to consider the last part of the proof of \cite[Theorem 5.1]{doyon_2019_diffusion}, Equation (123), that requires use of clustering of connected correlators with observables at different times, as the rest of the proof has equal-time correlators and remains the same. Consider \cite[Eq. 123]{doyon_2019_diffusion}: let $A,B,C \in \mathfrak{U}_{\rm loc}$ and the term $I_{s_1,s_2}(n)\coloneqq\int_{s_1}^{s_2} dt \langle A(0,t),B(n,0)C\rangle_1$. We need to show that this vanishes when $s_1,s_2 \leq n/(2υ_{\rm LR})$ as $n \to \infty$. It is sufficient to consider $s_1=0$, $s_2= nk$ for $k< 1/(2υ_{\rm LR}) $. We have 
\begin{equation}
  \begin{array}{*3{>{\displaystyle}l}p{5cm}}
    |I_{s_1,s_2}(n)| \leq \int_{s_1}^{s_2} dt \sum_{x \in \Z} | \langle A(x,t),B(n,0)C(0,0)\rangle^c| \\
     = \int_{s_1}^{s_2} dt \int_{x \in \R} dx| \langle A(\lfloor x \rfloor,t),B(n,0) C(0,0)\rangle^c| \\
     = n^2 \int_{0}^{k} dt \int_{x \in \R} dx| \langle A(\lfloor nx \rfloor,nt),B(n,0) C(0,0)\rangle^c|
    \end{array}
\end{equation}  

Following \cite[Eq 124]{doyon_2019_diffusion}, we  split the space integral into two regions $R_1 = \{ |x - \frac12| > \frac32+kυ_{{LR}} \}$ and $R_2 = \{  |x - \frac12| \leq \frac32+kυ_{{LR}} \}$, with $t \leq k$. By  \cite[Lemma V.5]{ampelogiannis_2024_clustering} clustering is controlled by $z = \min( \lfloor nx \rfloor-n, \lfloor nx \rfloor)$ if $t\leq z/υ_{LR}$, which holds in both $R_1,R_2$. Thus, the integrand is bounded exponentially and $I_{s_1,s_2}(n)$ vanishes at $n \to \infty$

\section{Reversible Lindbladian spin transport}\label{apprev}

We now show that, under the conditions
\begin{equation}\label{conditionreversibility}
    \alpha = 0,\quad \sum_i(c_i-d_i)=0,\quad \sum_i(\b e_i a_i - e_i\b b_i)=0,
\end{equation}
which in particular imply local detailed balance \eqref{detailed_balance}, spin transport with Lindbladian $\mathcal L^*$ is reversible on $\mathcal H_k$, $k=0,1$:
\begin{equation}
    \tau^*_t(s(x)) = \tau_{-t}(s(x))
\end{equation}
for all $t\in\R,\,x\in\Z$. In fact, this holds on the subspace $\mathcal H_{k,\rm spin}$ of $\mathcal H_k$ obtained from (the closure of) the space of local operators
\begin{equation}
    \mathfrak U_{\rm spin} = {\rm span}_\Lambda\Bigg(\prod_{y\in\Lambda}\sigma_y^3\Bigg)\ni s(x),
\end{equation}
that is $\tau^*_t(A) = \tau_{-t}(A)$ for all $A\in\mathcal H_{k, \rm spin}$, and $\mathcal L^* = -\mathcal L$ on $\mathfrak U_{\rm spin}$. In particular, both $\tau$ and $\tau^*$ form {\em unitary groups} on $\mathcal H_{k,\rm spin}$ (which we believe can be shown to be strongly continuous). Hence, the setup of \cite[Secs 4, 5]{doyon_hydrodynamic_2022} applies and projection theorems hold. Note that on $\mathfrak U_{\rm spin}$, the Hamiltonian $H$ with $\alpha=0$ acts trivially -- thus conditions \eqref{conditionreversibility} and the restriction to these observables gives a model of pure quantum jumps.

For the proof, we only need to show that $\mathcal L^* = -\mathcal L$ on $\mathfrak U_{\rm spin}$. We note that $\mathcal L^* = \sum_{x\in\Z} \mathcal L^*(x)$ where $\mathcal L^*(x)(A) = i[h(x),A] + \frc12\sum_i L_i(x)^\dag[A,L_i(x)] + [L_i(x)^\dag,A]L_i(x)$, and similarly for $\mathcal L$. Further, we have
\begin{align}
    &\mathcal L^*(x)(\1_x\1_{x+1}) = 0,\ 
    \mathcal L^*(x)(\sigma^3_x \1_{x+1}) = -j(x),\\
    &\mathcal L^*(x)(\1_x\sigma^3_{x+1}) = j(x),\ 
    \mathcal L^*(x)(\sigma^3_x\sigma^3_{x+1}) = 0.
\end{align}
Under the condition \eqref{conditionreversibility} we see that (1) the current $j(x)=j^{\rm L}(x)$, Eq.~\eqref{Lcurrent} (recall that $j^{\rm H}(x)=0$), lies in $\mathfrak U_{\rm spin}$, and (2) $\mathcal L$, obtained from exchanging $b_i\leftrightarrow \b a_i$ and $c_i,d_i,e_i\leftrightarrow \b c_i, \b d_i,\b e_i$, gives the same results with $j(x)\leftrightarrow -j(x)$. Therefore $\mathcal L^*(x)(o_xo_{x+1})=-\mathcal L(x)(o_xo_{x+1})\in\mathfrak U_{\rm spin}$ for $o_xo_{x+1}\in\mathfrak U_{\rm spin}$. Thus on any operator $A=\prod_{y\in\Z} o_y\in \mathfrak U_{\rm spin}$ with $o_x\in\{\1_x,\sigma^3_x\}$ we have $\mathcal L^*(A) = \sum_{x\in\Z}\Big(\prod_{y\in\Z\setminus\{x,x+1\}}o_y\Big)\mathcal L^*(o_xo_{x+1}) = -\mathcal L(A)\in \mathfrak U_{\rm spin}$, which completes the proof.


\end{appendices}

\bibliography{references}


\begin{thebibliography}{41}
\ifx \bisbn   \undefined \def \bisbn  #1{ISBN #1}\fi
\ifx \binits  \undefined \def \binits#1{#1}\fi
\ifx \bauthor  \undefined \def \bauthor#1{#1}\fi
\ifx \batitle  \undefined \def \batitle#1{#1}\fi
\ifx \bjtitle  \undefined \def \bjtitle#1{#1}\fi
\ifx \bvolume  \undefined \def \bvolume#1{\textbf{#1}}\fi
\ifx \byear  \undefined \def \byear#1{#1}\fi
\ifx \bissue  \undefined \def \bissue#1{#1}\fi
\ifx \bfpage  \undefined \def \bfpage#1{#1}\fi
\ifx \blpage  \undefined \def \blpage #1{#1}\fi
\ifx \burl  \undefined \def \burl#1{\textsf{#1}}\fi
\ifx \doiurl  \undefined \def \doiurl#1{\url{https://doi.org/#1}}\fi
\ifx \betal  \undefined \def \betal{\textit{et al.}}\fi
\ifx \binstitute  \undefined \def \binstitute#1{#1}\fi
\ifx \binstitutionaled  \undefined \def \binstitutionaled#1{#1}\fi
\ifx \bctitle  \undefined \def \bctitle#1{#1}\fi
\ifx \beditor  \undefined \def \beditor#1{#1}\fi
\ifx \bpublisher  \undefined \def \bpublisher#1{#1}\fi
\ifx \bbtitle  \undefined \def \bbtitle#1{#1}\fi
\ifx \bedition  \undefined \def \bedition#1{#1}\fi
\ifx \bseriesno  \undefined \def \bseriesno#1{#1}\fi
\ifx \blocation  \undefined \def \blocation#1{#1}\fi
\ifx \bsertitle  \undefined \def \bsertitle#1{#1}\fi
\ifx \bsnm \undefined \def \bsnm#1{#1}\fi
\ifx \bsuffix \undefined \def \bsuffix#1{#1}\fi
\ifx \bparticle \undefined \def \bparticle#1{#1}\fi
\ifx \barticle \undefined \def \barticle#1{#1}\fi
\bibcommenthead
\ifx \bconfdate \undefined \def \bconfdate #1{#1}\fi
\ifx \botherref \undefined \def \botherref #1{#1}\fi
\ifx \url \undefined \def \url#1{\textsf{#1}}\fi
\ifx \bchapter \undefined \def \bchapter#1{#1}\fi
\ifx \bbook \undefined \def \bbook#1{#1}\fi
\ifx \bcomment \undefined \def \bcomment#1{#1}\fi
\ifx \oauthor \undefined \def \oauthor#1{#1}\fi
\ifx \citeauthoryear \undefined \def \citeauthoryear#1{#1}\fi
\ifx \endbibitem  \undefined \def \endbibitem {}\fi
\ifx \bconflocation  \undefined \def \bconflocation#1{#1}\fi
\ifx \arxivurl  \undefined \def \arxivurl#1{\textsf{#1}}\fi
\csname PreBibitemsHook\endcsname

\bibitem[\protect\citeauthoryear{Spohn}{1991}]{spohn_large_1991}
\begin{bbook}
\bauthor{\bsnm{Spohn}, \binits{H.}}:
\bbtitle{Large Scale Dynamics of Interacting Particles}.
\bpublisher{Springer}, \blocation{???}
(\byear{1991})
\end{bbook}
\endbibitem

\bibitem[\protect\citeauthoryear{Nardis et~al.}{2019}]{DeNardis_Doyon_2019_diffusion_GHD}
\begin{barticle}
\bauthor{\bsnm{Nardis}, \binits{J.D.}},
\bauthor{\bsnm{Bernard}, \binits{D.}},
\bauthor{\bsnm{Doyon}, \binits{B.}}:
\batitle{Diffusion in generalized hydrodynamics and quasiparticle scattering}.
\bjtitle{SciPost Physics}
\bvolume{6},
\bfpage{049}
(\byear{2019})
\doiurl{10.21468/SciPostPhys.6.4.049} .
\bcomment{Publisher: SciPost}
\end{barticle}
\endbibitem

\bibitem[\protect\citeauthoryear{Prosen}{2014}]{Prosen_2014_diffusion_bound}
\begin{barticle}
\bauthor{\bsnm{Prosen}, \binits{T.}}:
\batitle{Lower bounds on high-temperature diffusion constants from quadratically extensive almost-conserved operators}.
\bjtitle{Physical Review E: Statistical Physics, Plasmas, Fluids, and Related Interdisciplinary Topics}
\bvolume{89}(\bissue{1}),
\bfpage{012142}
(\byear{2014})
\doiurl{10.1103/PhysRevE.89.012142} .
\bcomment{Number of pages: 5 Publisher: American Physical Society}
\end{barticle}
\endbibitem

\bibitem[\protect\citeauthoryear{Medenjak et~al.}{2017}]{Medenjak_2017_diffusion_bound}
\begin{barticle}
\bauthor{\bsnm{Medenjak}, \binits{M.}},
\bauthor{\bsnm{Karrasch}, \binits{C.}},
\bauthor{\bsnm{Prosen}, \binits{T.}}:
\batitle{Lower bounding diffusion constant by the curvature of drude weight}.
\bjtitle{Physical Review Letters}
\bvolume{119}(\bissue{8}),
\bfpage{080602}
(\byear{2017})
\doiurl{10.1103/PhysRevLett.119.080602} .
\bcomment{Number of pages: 5 Publisher: American Physical Society}
\end{barticle}
\endbibitem

\bibitem[\protect\citeauthoryear{Doyon et~al.}{2023a}]{doyon_2023_hydro_longrange}
\begin{barticle}
\bauthor{\bsnm{Doyon}, \binits{B.}},
\bauthor{\bsnm{Perfetto}, \binits{G.}},
\bauthor{\bsnm{Sasamoto}, \binits{T.}},
\bauthor{\bsnm{Yoshimura}, \binits{T.}}:
\batitle{Emergence of hydrodynamic spatial long-range correlations in nonequilibrium many-body systems}.
\bjtitle{Physical Review Letters}
\bvolume{131}(\bissue{2}),
\bfpage{027101}
(\byear{2023})
\doiurl{10.1103/PhysRevLett.131.027101} .
\bcomment{Number of pages: 7 Publisher: American Physical Society}
\end{barticle}
\endbibitem

\bibitem[\protect\citeauthoryear{Doyon et~al.}{2023b}]{doyon_2023_ballistic_fluctuations}
\begin{barticle}
\bauthor{\bsnm{Doyon}, \binits{B.}},
\bauthor{\bsnm{Perfetto}, \binits{G.}},
\bauthor{\bsnm{Sasamoto}, \binits{T.}},
\bauthor{\bsnm{Yoshimura}, \binits{T.}}:
\batitle{Ballistic macroscopic fluctuation theory}.
\bjtitle{SciPost Physics}
\bvolume{15},
\bfpage{136}
(\byear{2023})
\doiurl{10.21468/SciPostPhys.15.4.136} .
\bcomment{Publisher: SciPost}
\end{barticle}
\endbibitem

\bibitem[\protect\citeauthoryear{Medenjak et~al.}{2020}]{Medenjak_Yoshimura_2020_Diffusion_bound}
\begin{barticle}
\bauthor{\bsnm{Medenjak}, \binits{M.}},
\bauthor{\bsnm{Nardis}, \binits{J.D.}},
\bauthor{\bsnm{Yoshimura}, \binits{T.}}:
\batitle{Diffusion from convection}.
\bjtitle{SciPost Physics}
\bvolume{9},
\bfpage{075}
(\byear{2020})
\doiurl{10.21468/SciPostPhys.9.5.075} .
\bcomment{Publisher: SciPost}
\end{barticle}
\endbibitem

\bibitem[\protect\citeauthoryear{Hübner et~al.}{2024}]{hubner_2024_diffusive}
\begin{botherref}
\oauthor{\bsnm{Hübner}, \binits{F.}},
\oauthor{\bsnm{Biagetti}, \binits{L.}},
\oauthor{\bsnm{Nardis}, \binits{J.D.}},
\oauthor{\bsnm{Doyon}, \binits{B.}}:
Diffusive hydrodynamics from long-range correlations.
arXiv: 2408.04502 [cond-mat.stat-mech]
(2024).
\url{https://arxiv.org/abs/2408.04502}
\end{botherref}
\endbibitem

\bibitem[\protect\citeauthoryear{Gopalakrishnan et~al.}{2024}]{gopalakrishnan_2024_universality}
\begin{barticle}
\bauthor{\bsnm{Gopalakrishnan}, \binits{S.}},
\bauthor{\bsnm{Morningstar}, \binits{A.}},
\bauthor{\bsnm{Vasseur}, \binits{R.}},
\bauthor{\bsnm{Khemani}, \binits{V.}}:
\batitle{Distinct universality classes of diffusive transport from full counting statistics}.
\bjtitle{Physical Review B}
\bvolume{109}(\bissue{2}),
\bfpage{024417}
(\byear{2024})
\doiurl{10.1103/PhysRevB.109.024417} .
\bcomment{Number of pages: 10 Publisher: American Physical Society}
\end{barticle}
\endbibitem

\bibitem[\protect\citeauthoryear{Krajnik et~al.}{2022}]{krajnik_2022_anomalous_current_fl}
\begin{barticle}
\bauthor{\bsnm{Krajnik}, \binits{Z.}},
\bauthor{\bsnm{Schmidt}, \binits{J.}},
\bauthor{\bsnm{Pasquier}, \binits{V.}},
\bauthor{\bsnm{Ilievski}, \binits{E.}},
\bauthor{\bsnm{Prosen}, \binits{T.}}:
\batitle{Exact anomalous current fluctuations in a deterministic interacting model}.
\bjtitle{Physical Review Letters}
\bvolume{128}(\bissue{16}),
\bfpage{160601}
(\byear{2022})
\doiurl{10.1103/PhysRevLett.128.160601} .
\bcomment{Number of pages: 6 Publisher: American Physical Society}
\end{barticle}
\endbibitem

\bibitem[\protect\citeauthoryear{Yoshimura and Krajnik}{2024}]{yoshimura_2024_anomalous_current}
\begin{botherref}
\oauthor{\bsnm{Yoshimura}, \binits{T.}},
\oauthor{\bsnm{Krajnik}, \binits{Z.}}:
Anomalous current fluctuations from {Euler} hydrodynamics.
arXiv: 2406.20091 [cond-mat.stat-mech]
(2024).
\url{https://arxiv.org/abs/2406.20091}
\end{botherref}
\endbibitem

\bibitem[\protect\citeauthoryear{Gopalakrishnan et~al.}{2024}]{gopalakrishnan_2024_nongaussian_dif}
\begin{barticle}
\bauthor{\bsnm{Gopalakrishnan}, \binits{S.}},
\bauthor{\bsnm{McCulloch}, \binits{E.}},
\bauthor{\bsnm{Vasseur}, \binits{R.}}:
\batitle{Non-{Gaussian} diffusive fluctuations in {Dirac} fluids}.
\bjtitle{Proceedings of the National Academy of Sciences}
\bvolume{121}(\bissue{50}),
\bfpage{2403327121}
(\byear{2024})
\doiurl{10.1073/pnas.2403327121} .
\bcomment{Publisher: Proceedings of the National Academy of Sciences}.
Accessed 2025-01-28
\end{barticle}
\endbibitem

\bibitem[\protect\citeauthoryear{Doyon}{2022}]{doyon_2019_diffusion}
\begin{barticle}
\bauthor{\bsnm{Doyon}, \binits{B.}}:
\batitle{Diffusion and {Superdiffusion} from {Hydrodynamic} {Projections}}.
\bjtitle{Journal of Statistical Physics}
\bvolume{186}(\bissue{2}),
\bfpage{25}
(\byear{2022})
\doiurl{10.1007/s10955-021-02863-6}
\end{barticle}
\endbibitem

\bibitem[\protect\citeauthoryear{Kobayashi and Watanabe}{2022}]{kobayashi_2022_vanishing_currents}
\begin{barticle}
\bauthor{\bsnm{Kobayashi}, \binits{H.}},
\bauthor{\bsnm{Watanabe}, \binits{H.}}:
\batitle{Vanishing and nonvanishing persistent currents of various conserved quantities}.
\bjtitle{Physical Review Letters}
\bvolume{129}(\bissue{17}),
\bfpage{176601}
(\byear{2022})
\doiurl{10.1103/PhysRevLett.129.176601} .
\bcomment{Number of pages: 6 Publisher: American Physical Society}
\end{barticle}
\endbibitem

\bibitem[\protect\citeauthoryear{Buča and Prosen}{2012}]{buca_2012_lindblad_transport}
\begin{barticle}
\bauthor{\bsnm{Buča}, \binits{B.}},
\bauthor{\bsnm{Prosen}, \binits{T.}}:
\batitle{A note on symmetry reductions of the {Lindblad} equation: transport in constrained open spin chains}.
\bjtitle{New Journal of Physics}
\bvolume{14}(\bissue{7}),
\bfpage{073007}
(\byear{2012})
\doiurl{10.1088/1367-2630/14/7/073007} .
\bcomment{Publisher: IOP Publishing}
\end{barticle}
\endbibitem

\bibitem[\protect\citeauthoryear{Alicki}{1976}]{Alicki_1976_detailed_balance}
\begin{barticle}
\bauthor{\bsnm{Alicki}, \binits{R.}}:
\batitle{On the detailed balance condition for non-hamiltonian systems}.
\bjtitle{Reports on Mathematical Physics}
\bvolume{10}(\bissue{2}),
\bfpage{249}--\blpage{258}
(\byear{1976})
\doiurl{10.1016/0034-4877(76)90046-X}
\end{barticle}
\endbibitem

\bibitem[\protect\citeauthoryear{FAGNOLA and UMANITÀ}{2007}]{fangola_2007_generator_quantum_markov}
\begin{barticle}
\bauthor{\bsnm{FAGNOLA}, \binits{F.}},
\bauthor{\bsnm{UMANITÀ}, \binits{V.}}:
\batitle{{GENERATORS} {OF} {DETAILED} {BALANCE} {QUANTUM} {MARKOV} {SEMIGROUPS}}.
\bjtitle{Infinite Dimensional Analysis, Quantum Probability and Related Topics}
\bvolume{10}(\bissue{03}),
\bfpage{335}--\blpage{363}
(\byear{2007})
\doiurl{10.1142/S0219025707002762} .
\bcomment{Publisher: World Scientific Publishing Co.}
Accessed 2025-01-28
\end{barticle}
\endbibitem

\bibitem[\protect\citeauthoryear{Carlen and Maas}{2017}]{carlen_2017_gradient}
\begin{barticle}
\bauthor{\bsnm{Carlen}, \binits{E.A.}},
\bauthor{\bsnm{Maas}, \binits{J.}}:
\batitle{Gradient flow and entropy inequalities for quantum {Markov} semigroups with detailed balance}.
\bjtitle{Journal of Functional Analysis}
\bvolume{273}(\bissue{5}),
\bfpage{1810}--\blpage{1869}
(\byear{2017})
\doiurl{10.1016/j.jfa.2017.05.003}
\end{barticle}
\endbibitem

\bibitem[\protect\citeauthoryear{Ampelogiannis and Doyon}{2024}]{ampelogiannis_2024_clustering}
\begin{botherref}
\oauthor{\bsnm{Ampelogiannis}, \binits{D.}},
\oauthor{\bsnm{Doyon}, \binits{B.}}:
Clustering of higher order connected correlations in {C}$^{\textrm{*}}$ dynamical systems.
arXiv: 2405.09388 [math-ph]
(2024).
\url{https://arxiv.org/abs/2405.09388}
\end{botherref}
\endbibitem

\bibitem[\protect\citeauthoryear{Spohn}{2014}]{spohn_2014_nonlinear}
\begin{barticle}
\bauthor{\bsnm{Spohn}, \binits{H.}}:
\batitle{Nonlinear {Fluctuating} {Hydrodynamics} for {Anharmonic} {Chains}}.
\bjtitle{Journal of Statistical Physics}
\bvolume{154}(\bissue{5}),
\bfpage{1191}--\blpage{1227}
(\byear{2014})
\doiurl{10.1007/s10955-014-0933-y}
\end{barticle}
\endbibitem

\bibitem[\protect\citeauthoryear{Doyon}{2022}]{doyon_hydrodynamic_2022}
\begin{barticle}
\bauthor{\bsnm{Doyon}, \binits{B.}}:
\batitle{Hydrodynamic {Projections} and the {Emergence} of {Linearised} {Euler} {Equations} in {One}-{Dimensional} {Isolated} {Systems}}.
\bjtitle{Communications in Mathematical Physics}
\bvolume{391}(\bissue{1}),
\bfpage{293}--\blpage{356}
(\byear{2022})
\doiurl{10.1007/s00220-022-04310-3}
\end{barticle}
\endbibitem

\bibitem[\protect\citeauthoryear{Prosen}{1999}]{Prosen_1999_ergodic}
\begin{barticle}
\bauthor{\bsnm{Prosen}, \binits{T.}}:
\batitle{Ergodic properties of a generic nonintegrable quantum many-body system in the thermodynamic limit}.
\bjtitle{Physical Review E: Statistical Physics, Plasmas, Fluids, and Related Interdisciplinary Topics}
\bvolume{60}(\bissue{4}),
\bfpage{3949}--\blpage{3968}
(\byear{1999})
\doiurl{10.1103/PhysRevE.60.3949} .
\bcomment{Number of pages: 0 Publisher: American Physical Society}
\end{barticle}
\endbibitem

\bibitem[\protect\citeauthoryear{{Tomaz Prosen}}{1998}]{prosen_1998_quantum_invariants}
\begin{barticle}
\bauthor{\bsnm{{Tomaz Prosen}}}:
\batitle{Quantum invariants of motion in a generic many-body system}.
\bjtitle{Journal of Physics A: Mathematical and General}
\bvolume{31}(\bissue{37}),
\bfpage{645}
(\byear{1998})
\doiurl{10.1088/0305-4470/31/37/004}
\end{barticle}
\endbibitem

\bibitem[\protect\citeauthoryear{Doyon}{2017}]{doyon_thermalization_2017}
\begin{barticle}
\bauthor{\bsnm{Doyon}, \binits{B.}}:
\batitle{Thermalization and {Pseudolocality} in {Extended} {Quantum} {Systems}}.
\bjtitle{Communications in Mathematical Physics}
\bvolume{351}(\bissue{1}),
\bfpage{155}--\blpage{200}
(\byear{2017})
\doiurl{10.1007/s00220-017-2836-7}
\end{barticle}
\endbibitem

\bibitem[\protect\citeauthoryear{Ampelogiannis and Doyon}{2023}]{ampelogiannis_long-time_2023}
\begin{barticle}
\bauthor{\bsnm{Ampelogiannis}, \binits{D.}},
\bauthor{\bsnm{Doyon}, \binits{B.}}:
\batitle{Long-{Time} {Dynamics} in {Quantum} {Spin} {Lattices}: {Ergodicity} and {Hydrodynamic} {Projections} at {All} {Frequencies} and {Wavelengths}}.
\bjtitle{Annales Henri Poincaré}
(\byear{2023})
\doiurl{10.1007/s00023-023-01304-2}
\end{barticle}
\endbibitem

\bibitem[\protect\citeauthoryear{Shiraishi}{2019}]{shiraishi_2019_absence_local}
\begin{barticle}
\bauthor{\bsnm{Shiraishi}, \binits{N.}}:
\batitle{Proof of the absence of local conserved quantities in the {XYZ} chain with a magnetic field}.
\bjtitle{Europhysics Letters}
\bvolume{128}(\bissue{1}),
\bfpage{17002}
(\byear{2019})
\doiurl{10.1209/0295-5075/128/17002} .
\bcomment{Publisher: EDP Sciences, IOP Publishing and Società Italiana di Fisica}
\end{barticle}
\endbibitem

\bibitem[\protect\citeauthoryear{Chiba}{2024}]{chiba_2023_ising_nonintegrability}
\begin{barticle}
\bauthor{\bsnm{Chiba}, \binits{Y.}}:
\batitle{Proof of absence of local conserved quantities in the mixed-field {Ising} chain}.
\bjtitle{Physical Review B}
\bvolume{109}(\bissue{3}),
\bfpage{035123}
(\byear{2024})
\doiurl{10.1103/PhysRevB.109.035123} .
\bcomment{Number of pages: 15 Publisher: American Physical Society}
\end{barticle}
\endbibitem

\bibitem[\protect\citeauthoryear{Shiraishi}{2024}]{shiraishi2024absence}
\begin{barticle}
\bauthor{\bsnm{Shiraishi}, \binits{N.}}:
\batitle{Absence of {Local} {Conserved} {Quantity} in the {Heisenberg} {Model} with {Next}-{Nearest}-{Neighbor} {Interaction}}.
\bjtitle{Journal of Statistical Physics}
\bvolume{191}(\bissue{9}),
\bfpage{114}
(\byear{2024})
\doiurl{10.1007/s10955-024-03326-4}
\end{barticle}
\endbibitem

\bibitem[\protect\citeauthoryear{Park and Lee}{2024}]{park_2024_nonintegrability}
\begin{botherref}
\oauthor{\bsnm{Park}, \binits{H.K.}},
\oauthor{\bsnm{Lee}, \binits{S.}}:
Graph theoretical proof of nonintegrability in quantum many-body systems : {Application} to the {PXP} model.
arXiv: 2403.02335 [cond-mat.stat-mech]
(2024).
\url{https://arxiv.org/abs/2403.02335}
\end{botherref}
\endbibitem

\bibitem[\protect\citeauthoryear{Nachtergaele et~al.}{2011}]{nachtergaele_2011_LR_irreversible}
\begin{botherref}
\oauthor{\bsnm{Nachtergaele}, \binits{B.}},
\oauthor{\bsnm{Vershynina}, \binits{A.}},
\oauthor{\bsnm{Zagrebnov}, \binits{V.A.}}:
Lieb-robinson bounds and existence of the thermodynamic limit for a class of irreversible quantum dynamics.
arXiv: 1103.1122 [math-ph]
(2011).
\url{https://arxiv.org/abs/1103.1122}
\end{botherref}
\endbibitem

\bibitem[\protect\citeauthoryear{Lieb and Robinson}{1972}]{Lieb:1972wy}
\begin{barticle}
\bauthor{\bsnm{Lieb}, \binits{E.H.}},
\bauthor{\bsnm{Robinson}, \binits{D.W.}}:
\batitle{The finite group velocity of quantum spin systems}.
\bjtitle{Communications in Mathematical Physics}
\bvolume{28}(\bissue{3}),
\bfpage{251}--\blpage{257}
(\byear{1972})
\doiurl{10.1007/BF01645779}
\end{barticle}
\endbibitem

\bibitem[\protect\citeauthoryear{Poulin}{2010}]{poulin_2010_LR_markovian}
\begin{barticle}
\bauthor{\bsnm{Poulin}, \binits{D.}}:
\batitle{Lieb-robinson bound and locality for general markovian quantum dynamics}.
\bjtitle{Physical Review Letters}
\bvolume{104}(\bissue{19}),
\bfpage{190401}
(\byear{2010})
\doiurl{10.1103/PhysRevLett.104.190401} .
\bcomment{Number of pages: 4 Publisher: American Physical Society}
\end{barticle}
\endbibitem

\bibitem[\protect\citeauthoryear{Sims}{2011}]{sims_2011_LR}
\begin{botherref}
\oauthor{\bsnm{Sims}, \binits{R.}}:
Lieb-{Robinson} {Bounds} and {Quasi}-locality for the {Dynamics} of {Many}-{Body} {Quantum} {Systems}.
WORLD SCIENTIFIC
(2011).
\doiurl{10.1142/9789814350365_0007} .
\url{https://doi.org/10.1142/9789814350365_0007}
Accessed 2024-07-18
\end{botherref}
\endbibitem

\bibitem[\protect\citeauthoryear{Wang and Hazzard}{2020}]{Wang_2020_tightening_LR}
\begin{barticle}
\bauthor{\bsnm{Wang}, \binits{Z.}},
\bauthor{\bsnm{Hazzard}, \binits{K.R.A.}}:
\batitle{Tightening the lieb-robinson bound in locally interacting systems}.
\bjtitle{PRX Quantum}
\bvolume{1}(\bissue{1}),
\bfpage{010303}
(\byear{2020})
\doiurl{10.1103/PRXQuantum.1.010303} .
\bcomment{Number of pages: 24 Publisher: American Physical Society}
\end{barticle}
\endbibitem

\bibitem[\protect\citeauthoryear{Bratteli and Robinson}{1987}]{bratteli_operator_1987}
\begin{bbook}
\bauthor{\bsnm{Bratteli}, \binits{O.}},
\bauthor{\bsnm{Robinson}, \binits{D.W.}}:
\bbtitle{Operator {Algebras} and {Quantum} {Statistical} {Mechanics} 1: {C}*- and {W}*-{Algebras}. {Symmetry} {Groups}. {Decomposition} of {States}}.
\bsertitle{Operator {Algebras} and {Quantum} {Statistical} {Mechanics}}.
\bpublisher{Springer}, \blocation{???}
(\byear{1987})
\end{bbook}
\endbibitem

\bibitem[\protect\citeauthoryear{Nachtergaele et~al.}{2019}]{nachtergaele_quasi-locality_2019}
\begin{barticle}
\bauthor{\bsnm{Nachtergaele}, \binits{B.}},
\bauthor{\bsnm{Sims}, \binits{R.}},
\bauthor{\bsnm{Young}, \binits{A.}}:
\batitle{Quasi-locality bounds for quantum lattice systems. {I}. {Lieb}-{Robinson} bounds, quasi-local maps, and spectral flow automorphisms}.
\bjtitle{Journal of Mathematical Physics}
\bvolume{60}(\bissue{6}),
\bfpage{061101}
(\byear{2019})
\doiurl{10.1063/1.5095769} .
\bcomment{Publisher: American Institute of Physics}.
Accessed 2021-04-15
\end{barticle}
\endbibitem

\bibitem[\protect\citeauthoryear{Pérez-García et~al.}{2006}]{perez-garcia_2006_contractivity}
\begin{barticle}
\bauthor{\bsnm{Pérez-García}, \binits{D.}},
\bauthor{\bsnm{Wolf}, \binits{M.M.}},
\bauthor{\bsnm{Petz}, \binits{D.}},
\bauthor{\bsnm{Ruskai}, \binits{M.B.}}:
\batitle{Contractivity of positive and trace-preserving maps under {Lp} norms}.
\bjtitle{Journal of Mathematical Physics}
\bvolume{47}(\bissue{8}),
\bfpage{083506}
(\byear{2006})
\doiurl{10.1063/1.2218675} .
Accessed 2025-01-28
\end{barticle}
\endbibitem

\bibitem[\protect\citeauthoryear{Ziolkowska and Essler}{2020}]{ziolkowska_2020_yang_baxter}
\begin{barticle}
\bauthor{\bsnm{Ziolkowska}, \binits{A.A.}},
\bauthor{\bsnm{Essler}, \binits{F.H.L.}}:
\batitle{Yang-{Baxter} integrable {Lindblad} equations}.
\bjtitle{SciPost Physics}
\bvolume{8},
\bfpage{044}
(\byear{2020})
\doiurl{10.21468/SciPostPhys.8.3.044} .
\bcomment{Publisher: SciPost}
\end{barticle}
\endbibitem

\bibitem[\protect\citeauthoryear{Essler and Piroli}{2020}]{essler_2020_integrablility_lindbladian}
\begin{barticle}
\bauthor{\bsnm{Essler}, \binits{F.H.L.}},
\bauthor{\bsnm{Piroli}, \binits{L.}}:
\batitle{Integrability of one-dimensional {Lindbladians} from operator-space fragmentation}.
\bjtitle{Physical Review E: Statistical Physics, Plasmas, Fluids, and Related Interdisciplinary Topics}
\bvolume{102}(\bissue{6}),
\bfpage{062210}
(\byear{2020})
\doiurl{10.1103/PhysRevE.102.062210} .
\bcomment{Number of pages: 7 Publisher: American Physical Society}
\end{barticle}
\endbibitem

\bibitem[\protect\citeauthoryear{Araki}{1969}]{Araki:1969bj}
\begin{barticle}
\bauthor{\bsnm{Araki}, \binits{H.}}:
\batitle{Gibbs states of a one dimensional quantum lattice}.
\bjtitle{Communications in Mathematical Physics}
\bvolume{14}(\bissue{2}),
\bfpage{120}--\blpage{157}
(\byear{1969})
\doiurl{10.1007/BF01645134}
\end{barticle}
\endbibitem

\bibitem[\protect\citeauthoryear{Ampelogiannis and Doyon}{2023}]{ampelogiannis_2023_almost}
\begin{barticle}
\bauthor{\bsnm{Ampelogiannis}, \binits{D.}},
\bauthor{\bsnm{Doyon}, \binits{B.}}:
\batitle{Almost {Everywhere} {Ergodicity} in {Quantum} {Lattice} {Models}}.
\bjtitle{Communications in Mathematical Physics}
\bvolume{404}(\bissue{2}),
\bfpage{735}--\blpage{768}
(\byear{2023})
\doiurl{10.1007/s00220-023-04849-9}
\end{barticle}
\endbibitem

\end{thebibliography}

\end{document}